\newcommand{\BEQ}{\begin{equation}}
\newcommand{\EEQ}{\end{equation}}
\def\bea{\begin{eqnarray}}
\def\eea{\end{eqnarray}}
\def\nn{\nonumber}
\newtheorem{Th}{Theorem}
\newtheorem{lemma}{Lemma}
\newtheorem{Prop}{Proposition}
\newtheorem{remark}{Remark}
\newcommand{\beq}[1]{\begin{equation}\label{#1}}
\newcommand{\eq}{\end{equation}}
\def\bea{\begin{eqnarray}}
\def\eea{\end{eqnarray}}
\def\bes{\begin{equation*} \begin{split}}
\def\ees{\end{split} \end{equation*}}
\def\R{{\mathbb{ R}}}
\newcommand{\g}{\ensuremath{\mathfrak{g}}}
\newcommand{\gl}{\ensuremath{\mathfrak{gl}}}
\newcommand{\sln}{\ensuremath{\mathfrak{sl}_n}}
\newcommand{\bn}{\ensuremath{\mathfrak{b}_n}}
\begin{document}

\title{ Full symmetric Toda system: \\ QR-solution for complete DLNT-family}
\author[1,2,3]{Yury B. Chernyakov\footnote{chernyakov@itep.ru}}
\author[1,2,4]{Georgy I. Sharygin\footnote{sharygin@itep.ru}}
\author[1,4,5]{Dmitry V. Talalaev\footnote{dtalalaev@yandex.ru}}
\affil[1]{\small NRC "Kurchatov institute", Kurchatov square, 1, 123182, Moscow, Russia.
}
\affil[2]{\small  Moscow Institute of Physics and Technology, 141700, Dolgoprudny, Russia.
}
\affil[3]{\small Institute for Information Transmission Problem, RAS, Bolshoy Karetny per. 19, build.1, 127051, Moscow, Russia.
}
\affil[4]{\small  Lomonosov Moscow State University,
119991, Moscow, Russia.
}
\affil[5]{\small Demidov Yaroslavl State University, 150003, Sovetskaya Str. 14
}

\date{\today}
%
%
%

\maketitle
\begin{abstract}
The paper is devoted to the algebraic and geometric aspects of the full symmetric Toda system. We construct a solution to the complete Deift-Li-Nanda-Tomei flows system using the QR decomposition method. For this purpose we introduce specialized invariant tensor operations on the Lax operator of the model. These operations have a direct interpretation in terms of the representation theory of Lie algebras. We expect that this approach can be effective in studying the geometry of flag varieties.
\end{abstract}
Keywords: Full Toda system, QR-algorithm, flag variety, noncommutative integrability.
\tableofcontents
\section{Introduction}
Ever since its discovery in mid 1960-ies (although reportedly before that it showed up as a curious non-chaotic behaviour in the Ulam-Fermi numerical experiment in 1950-ies), the Toda system (open or periodic Toda chain and its generalisations) has almost always remained in the spotlight of attention of the specialists in Mathematical Physics. Seemingly simple way to formulate it, on one hand, and the fact that it shows up in one or another manner in virtually any study of integrable systems, from Matrix models to Painlev\'e equations, makes it an indispensable tool and a central object of interest in many works. The fact that it can be easily generalised to embrace different phase spaces, its relation with algebra and geometry of the Lie groups and Lie algebras, give further incentives for the research of its properties.

In this paper we are dealing with one of the popular generalizations of the usual Toda chain, that is with the full symmetric Toda system; putting it simply one can say that the symmetric Toda system differs from the usual Toda chain by the form of the Lax matrix (see section \ref{sect:todadef1}): whereas in the usual Toda chain, this matrix is symmetric and has three non-zero diagonals, the full symmetric system works with generic symmetric matrices, see equation \eqref{eq:laxtoda}.

This system is known to be Hamiltonian and integrable in an appropriately generalised sense of Liouville integrability criterion: the phase space of this system is not symplectic, so one needs to speak either about some sort of the noncommutative integrability (see \cite{Neh}) or about the integrability on symplectic leaves. The latter fact is far from being trivial, since the dimensions of symplectic leaves are rather large, in fact they are proportional to $n^2$; the constructions of additional integrals are scarce and rather complicated (see \cite{DLNT},\,\cite{CS}). Not all these families are commutative; in effect one can show that the symmetric Toda system is super-integrable. In our paper we try to shed some light on the genesis and the structure of the best known commutative family of additional Toda integrals, the \textit{chopping integrals} of Deift, Nando, Li and Tomei (see \cite{DLNT}). One knows that they can be obtained by means of the AKS reduction method from a family of $B_-$-invariant rational functions on $\sln^*$. On the other hand, the fact that these functions actually commute on $\sln^*$ is not easy to show: one either does it by direct computations (as in \cite{DLNT}), or one can derive it  from the properties of a Gaudin system, see \cite{T1}. In our paper we derive this property from the application of a differential operator to the invariant functions. We think, that this mechanism of obtaining commutative families, similar to the ``argument shift method'' is an interesting and a rather easy way of reasoning, which deserves a special attention.

Another subject, which we address in this paper is the way, one can obtain the solutions to the additional flows, i.e. to the Hamilton equations associated with the additional (chopping) integrals. It has been known in the literature, that the symmetric Toda flow can be solved by the method of ``QR-decomposition'': i.e. if $L_0$ is the initial position of the flow, and we decompose the exponent of $L_0$ in the product of an orthogonal and a lower-triangular matrix
\[
\exp(tL_0)=Q(t)R(t)
\]
then the curve $Q(t)L_0Q(t)^{-1}$ is the integral curve of the Toda system that goes through $L_0$. In our paper we show that similar constructions are applicable to the chopping integrals. To this end we use a special property of the Lax matrix, associated with these integrals, see sections \ref{sect:qrdec1} and \ref{sect:chopqr1}. We think, the fact that the DLNT integral flows can be solved by this method can be helpful if one wants to understand the geometry of this flows and the corresponding coordinates on the phase spaces.

\subsection{Composition and main results of the paper}
The body of paper is divided into four major parts: first, in section \ref{sect:prelim}, we speak about the known results concerned with the symmetric Toda system, beginning with the description of the center of the Kitillov-Kostant Poisson structure on $\sln^*$. Then we give a brief definition of the Toda system in terms of the Lax matrix, and speak a little about the known systems of first integrals: the chopping procedure. We also outline the major steps of the AKS reduction method, that gives commutative families of functions on Lie subalgebras from some invariant functions on the big algebra.

Then in section \ref{sect:keylem12}, we prove the commutativity of the chopping integrals by a method, based on the application of a differential operators $D_k$, see the lemmas \ref{lem:key} and \ref{lem:key2}. We think, that the role of these operators in the general theory is rather large and we hope to find further applications of these constructions in the future. We also give a description of another family of additional integrals (this family is noncommutative in general).

Section \ref{sect:laxqr} is the main part of the text. In this section we show that the QR decomposition method can be extended to the additional integrals, such as the chopping integrals. To this end we show that the Hamilton fields of these integrals are generated by the $M$-operators, that verify certain important additional condition, see the equation \eqref{eq:nablaeki}. This is done by finding a suitable matrix representation of the $M$-operator: first we do it in a particular case of the chopping integrals $H_{k,n-1}(L)$ (see sections \ref{sect:qrdec1} and \ref{sect:hamilttk}), and then in the general case, where the operators $D_k$ come up very handy in the reasoning, see section \ref{sect:chopqr1}.

The last section is dedicated to a list of open questions, concerned with the Toda system. In the end we add the appendix, where some of the computations, referred to in the paper are collected.

\paragraph{Acknowledgements} The work on sections 1,2 and 3 was carried out within the framework of a development programme for the Regional Scientific and Educational Mathematical Center of the Yaroslavl State University with financial support from the Ministry of Science and Higher Education of the Russian Federation (Agreement on provision of subsidy from the federal budget No. 075-02-2022-886) and was partially supported by the RFBR grant 20-01-00157. The work on sections 4,5,and 6 was partially supported by the RSCF grant 22-11-00272.
\section{Preliminaries}\label{sect:prelim}
In this section we describe the known results and constructions, related with the full symmetric Toda system and its integrals.
\subsection{Poisson algebra $S(\mathfrak{sl}_n)$}
Let $\mathfrak g$ be a Lie algebra, $\mathfrak g^*$ its dual space. The algebra of polynomial functions on the affine space $\mathfrak g^*$ is naturally identified with $S(\mathfrak g)$. Recall that the Lie bracket on a Lie algebra $\mathfrak g$ induces a Poisson structure in $S(g)$; it is well known that the Poisson center of the algebra $S(\mathfrak g)$ consists of the $G$-invariant elements in $S(\mathfrak g)$, that is the polynomial functions preserved by the (co)adjoint action on $\mathfrak g^*$ of the simply connected group $G$, generated by $\mathfrak g$.

For instance when $\mathfrak g=\mathfrak{sl}_n$, the Poisson structure is defined by the formula
\begin{equation}\label{eq:poikks}
\{e_{ij},e_{kl}\}=\delta_{jk}e_{il}-\delta_{li}e_{kj}.
\end{equation}
Here $e_{ij}\in S(\mathfrak{sl}_n),\,\sum_ie_{ii}=0$ are the generators of the Lie algebra $\mathfrak{sl}_n$. It is convenient to consider the matrix (the Lax operator) made up of $e_{ij}$
\bea
L=\sum_{ij} E_{ij}\otimes e_{ij},\nn
\eea
then $L\in Mat_n(S(\mathfrak{sl}_n)$ and one can write down the Poisson structure in an $R$-matrix form.

Another advantage of using the matrix $L$ is that it gives a simple way to describe the Poisson center of $S(\mathfrak{sl}_n)$: as we mentioned above $\mathcal Z(S(\mathfrak{sl}_n))$ is equal to the space of all $SL_n$-invariant functions in $S(\mathfrak{sl}_n)$. Then the generators of this subalgebra can be easily identified with the traces of the powers of the matrix $L$:
\bea
\mathcal Z(S(\mathfrak{sl}_n))=\langle I_m(L),\,m=2,\dots,n\rangle,\nn
\eea
where $\langle\cdot\rangle$ denotes the subalgebra, generated by $\cdot$ and
\bea
I_m=\frac{1}{m}Tr(L^m).\nn
\eea
In fact, the elements $I_m$ are independent, and the Poisson center of $S(\mathfrak{sl}_n)$ is just free polynomial algebra with generators $I_k$. Another way to get the generators of the Poisson center of $S(\mathfrak{sl}_n)$ is to consider the coefficients of the characteristic polynomial of $L$:
\begin{equation}\label{eq:generate1}
\det(L-\lambda\mathbbm 1)=\sum_{i=0}^n (-1)^{n-i}\lambda^{n-i}E_i(L),
\end{equation}
then $E_1=Tr(L)=0$ in \sln\ and
\[
\mathcal Z(S(\mathfrak{sl}_n))=\langle E_i(L),\,i=2,\dots,n\rangle.
\]
The elements $E_i(L)$ are again free generators of the center of $S(\mathfrak{sl}_n)$.

\subsection{Full symmetric Toda system and Borel subalgebras}\label{sect:todadef1}
Consider the Cartan decomposition of a real semisimple Lie algebra $\mathfrak g$:
\begin{equation}\label{eq:cartandec}
\g=\mathfrak{k}\oplus\mathfrak p,
\end{equation}
where $\mathfrak{k}$ is a maximal compact subalgebra in \g\ and $\mathfrak p$ is the subspace, orthogonal complement of $\mathfrak k$ with respect to the Killing form. The decomposition in \eqref{eq:cartandec} is determined with the help of Cartan involution $\theta:\g\to\g,\,\theta^2=\mathbbm 1$, so that $\theta_{|_{\mathfrak k}}=\mathbbm 1,\,\theta_{|_{\mathfrak p}}=-\mathbbm 1$. For instance for $\g=\sln$ (all the algebras in this section are assumed to be real) we have
\[
\sln=\mathfrak{so}_n\oplus Symm_n(\mathbb R)
\]
and $\theta(a)=-a^t$. Here $Symm_n(\mathbb R)$ denotes the space of symmetric $n\times n$ matrices, and $a^t$ is the transpose matrix. Then the Killing form on \g\ induces an isomorphism:
\begin{equation}\label{eq:identybor}
\mathfrak k\cong\mathfrak k^*,\, \mathfrak p=(\mathfrak k)^\perp\cong\mathfrak b^*
\end{equation}
Here $\mathfrak{b}$ denotes the (lower) Borel subalgebra in \g\ and $\mathfrak b^*$ is its dual space. For instance in the case $\g=\sln,\,\mathfrak b=\bn$, the algebra of lower triangular matrices in $\mathfrak{sl}_n$.

The identification \eqref{eq:identybor} induces the Poisson structure on $\mathfrak p$, it is this structure, which is used in the definition of the full symmetric Toda system. In particular, if $\g=\sln$, we have the Poisson structure pulled from the space $\bn^*$.  From now on we shall concentrate on this particular case, although the most part of our results are easy to generalise to arbitrary Cartan decompositions.

So let us consider the Toda system on \sln. By definition this is the flow, induced by the Poisson structure pulled from \bn\ for the Hamiltonian $H=\frac12Tr(L^2)$ (here we use the notation from the previous section). One can show that the corresponding Hamilton equation has the following Lax form
\begin{equation}\label{eq:laxtoda}
\dot L=[M(L),L],
\end{equation}
where $M:Symm_n\to\mathfrak{so}_n$ is the natural projection: if the symmetric matrix $L$ is equal to the sum of a diagonal matrix and the upper and lower triangular matrices $L_+,\,L_-$ (where $L_-=L_+^T$, of course) then
\begin{equation}\label{eq:moper}
M(L)=L_+-L_-.
\end{equation}
\begin{remark}\label{rem:moper1}\rm
The projection $M$ is in effect equal to the restriction onto $Symm_n$ of the natural projection $M:\sln\to\mathfrak{so}_n$, induced by the direct sum decomposition:
\[
\sln=\mathfrak{so}_n\oplus\bn.
\]
\end{remark}

\subsection{Chopping procedure in the Full Symmetric Toda system}\label{sect:chop1}
The identifications that we made above mean that the integrability of Toda system now will follow if we find a sufficient number of integrals in involution commuting with the Hamiltonian $H=\frac12Tr(L^2)$, regarded as a function on $\bn^*$. One such family is easy to obtain: it is not difficult to show that the center of the Poisson algebra $S(\sln)$ gives such a family, when restricted to $Symm_n$. In effect it is almost evident that all these functions will commute with the Hamilton function $H$: this is just the corollary of the Lax equation \eqref{eq:laxtoda}. The fact that these functions commute with each other follows from the next formula for the Hamilton field of a smooth function $f:Symm_n\to\R$ with respect to the Poisson structure, induced from $\mathfrak b_n$:
\begin{equation}\label{eq:hamfieldsym1}
X_f=[M(\nabla f),L]
\end{equation}
where $M$ is the projector to $\mathfrak{so}_n$ that we have used earlier, see \eqref{eq:moper}. Here for any smooth function on $\mathfrak{sl}_n$, we use $\nabla f$ to denote the matrix of partial derivatives of $f$: let $x_{ij},\,i,j=1,\dots,n$ denote the matrix elements of $L$; also let $\partial_{ij}=\frac{\partial}{\partial x_{ij}}$, then
\begin{equation}\label{eq:nabladef}
\nabla f=\begin{pmatrix}
\partial_{11}f &\dots &\partial_{1n}f\\
\vdots &\ddots &\vdots\\
\partial_{n1}f &\dots &\partial_{nn}f
\end{pmatrix}.
\end{equation}
Thus, the generators $E_m(L),\,m=1,\dots,n$ (coefficients of the characteristic polynomial of $L$) or $I_m(L)=\frac{1}{m}Tr(L^m),\,m=1,\dots,n$ of the center, of $S(\sln)$ give a family of the first integrals of the Toda system. However they are not sufficient in the great majority of cases, since the dimension of the space $Symm_n$ is $n(n-1)/2-1$ much greater than the number of these functions.

This means that one needs additional integrals to prove the integrability of the symmetric Toda system. The main construction, used in the literature to find such additional integrals is the so-called \textit{chopping construction}, or \textit{chopping procedure} suggested in the paper \cite{DLNT}. Let us recall it briefly.

Consider the matrix of the Lax operator $L$ of the Full Symmetric Toda system; it is a real symmetric matrix of order $n$. For future references we will fix the notation here:
\beq{Lax} L = \left(
\begin{array}{c c c c c c}
 a_{11} & a_{12} & ... & a_{1n}\\
 a_{12} & a_{22} & ... & a_{2n}\\
 ... & ... & ... & ...\\
 a_{1n} & a_{2n} & ... & a_{nn}\\
\end{array}
\right).
\eq
Observe that in accordance with traditions we use letters $a_{ij}$ and not $x_{ij}$ to denote the matrix elements of $L$ as symmetric matrix i.e.
\[
a_{ij}=(x_{ij})_{|_{Symm_n}}.
\]
One can now define the following set of characteristic polynomials:
\beq{Pkn}
\begin{array}{c}
\chi_{k}(L,\lambda) = \det\left((L - \lambda \mathbbm 1)^{(k)}\right),\\
\ \\
\chi_{k}(L,\lambda) = \sum_{m=0}^{n-2k} E_{k,m}(L)\lambda^{n-2k-m}, \,\,\, 0 \leq k \leq \left[\frac{n}{2}\right],
\end{array}
\eq
where $(L - \lambda \mathbbm 1)^{(k)}$ is the matrix of order $n-k$, which we obtain by deleting (``chopping out'') the $k$ upper rows and $k$ right columns of matrix $(L - \lambda \mathbbm 1)$, and $[ \ \ \ ]$ means the floor integer part; below we will usually abbreviate $\chi_{k}(L,\lambda)$ to just $\chi_{k}(\lambda)$. This procedure was introduced in paper \cite{DLNT}, where it's called the \textit{chopping procedure}. It was shown in the cited paper that the functions
\beq{Integrals}
I_{k,m}(L) = \frac{E_{k,m}(L)}{E_{k,0}(L)}, \ \ \ 0 \leq m \leq \left[\frac{1}{2}(n-1)\right],  \ \ \ 1 \leq k \leq n-2m
\eq
define $\left[\frac{1}{4} n^{2}\right]$ integrals in involution for the Full Symmetric Toda system on the generic orbit of order $2\left[\frac{1}{4} n^{2}\right]$. These integrals are functionally independent, we will call them the \textit{DLNT-integrals} or the \textit{chopping integrals}.

The notation for $E_{k,m}(L)$ is consistent with the previously introduced one: recall that we denoted by denote by $E_{m}(L)$ the coefficients of characteristic polynomials of the ``unchopped'' matrix. Then $E_m(L) =E_{0,m}$ (in fact $E_{0,0}(L)=1$), so the previously introduced integrals, that appear from the center of $S(\sln)$ are part of the family. Of course we can replace $E_m(L)$ with a more standard set of functions $I_{m}(L)=\frac{1}{m} Tr L^{m},\,m=1,\dots,n$.

The original proof of commutativity of the family $I_{k,m}(L)$ as functions on $Symm_n$ with respect to the Poisson structure, induced from \bn, given in \cite{DLNT}, was by a brute-force calculation. Below (see lemma \ref{lem:key}) we shall give an independent prove of the commutativity of $I_{k,m}(L)$; our approach is rather invariant and with its help we will be able to prove that the flow induced by the ``chopping integrals'' $I_{k,m}(L)$ admits the solution by QR-decomposition method.

\subsection{AKS lemma}
In order to obtain Poisson commutative families of rational functions in $Symm_n$, or (equivalently) in $Q(\mathfrak{b}_n)$, the algebra of \textit{rational} functions on $\mathfrak{b}_n^*$ we will use a universal construction, known as the Adler-Kostant-Symes (AKS) method (see \cite{Ad}, \cite{K1} and \cite{S1}). Let us recall the basic steps of this method. To this end consider a Lie algebra $\mathfrak g$, decomposed into the direct sum
\begin{equation}\label{eq:decompaks}
\mathfrak g=\mathfrak g_0\oplus\mathfrak g_1.
\end{equation}
Let $p:\mathfrak{g}\to\mathfrak{g}_0$ be the natural projection, induced by the decomposition \eqref{eq:decompaks}. We will denote by $S(p)$ the natural homomorphism of the symmetric algebras
\bea
S(p):S(\mathfrak{g})\rightarrow S(\mathfrak{g}_0),\nn
\eea
induced by $p$. In effect, $S(p)$ is just the projection with respect to the natural decomposition
\begin{equation}\label{eq:aks2}
S(\mathfrak{g})=S(\mathfrak{g}_0)\oplus\mathfrak{g}_1S(\mathfrak{g}).
\end{equation}
Observe that both terms in this decomposition are in fact Poisson subalgebras in $S(\mathfrak g)$. If $i:\mathfrak{g}_0\to\mathfrak{g}$ is the inclusion of the subalgebra, then $S(i)$ will denote the induced map on symmetric algebras. Clearly, $S(i)$ unlike $S(p)$ commutes with the Poisson brackets on $S(\mathfrak{g})$ and $S(\mathfrak{g}_0)$; in fact the first term in the decomposition \eqref{eq:aks2} is the image of the map $S(i)$; below we will usually identify $S(\mathfrak{g}_0)$ with the image of the inclusion map $S(i)$.

For an element $f\in S(\mathfrak g)$ let us denote its decomposition with respect to \eqref{eq:aks2} as
\bea
f=f_0 +f_1\nn
\eea
where $f_0=S(p)(f)$. The following lemma gives a way to construct a commutative subalgebras in $S(\mathfrak g_0)$.
\begin{lemma}\label{lem:aks}
Let $f,g\in S(\mathfrak{g})^{G_0}=S(\mathfrak{g})^{\mathfrak g_0}$ (here $G_0$ is the simply connected Lie group associated with $\mathfrak g_0$) and let $\{f,g\}=0$ in $S(\mathfrak g)$, then
\[
\{f_0,g_0\}=0=\{f_1,g_1\}.
\]
\end{lemma}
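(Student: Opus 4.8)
The plan is to establish the two equalities separately: first $\{f_0,g_0\}=0$ by a pointwise computation on $\mathfrak{g}_0^*$, and then $\{f_1,g_1\}=0$ as a purely algebraic consequence. The conceptual starting point is that, since $S(p)$ is dual to the projection $p:\mathfrak{g}\to\mathfrak{g}_0$, the element $f_0=S(p)(f)$ is the restriction of $f$ (viewed as a polynomial function on $\mathfrak{g}^*$) to the subspace $\mathfrak{g}_1^\perp\cong\mathfrak{g}_0^*$. As $S(i)$ is a Poisson embedding, the bracket $\{f_0,g_0\}$ taken in $S(\mathfrak{g}_0)$ agrees with the one in $S(\mathfrak{g})$, and in $S(\mathfrak{g}_0)$ it is the Lie--Poisson bracket of $\mathfrak{g}_0$, which at $\mu\in\mathfrak{g}_1^\perp\cong\mathfrak{g}_0^*$ reads $\{f_0,g_0\}(\mu)=\langle\mu,[df_0(\mu),dg_0(\mu)]\rangle$, the differentials being regarded as elements of $\mathfrak{g}_0$.

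Next I would carry out the pointwise computation. Fix $\mu\in\mathfrak{g}_1^\perp$ and split the differentials of $f$ and $g$ according to $\mathfrak{g}=\mathfrak{g}_0\oplus\mathfrak{g}_1$, writing $df(\mu)=a_0+a_1$ and $dg(\mu)=b_0+b_1$ with $a_0,b_0\in\mathfrak{g}_0$ and $a_1,b_1\in\mathfrak{g}_1$. Since $df_0(\mu)=p(df(\mu))=a_0$ and likewise $dg_0(\mu)=b_0$, we have $\{f_0,g_0\}(\mu)=\langle\mu,[a_0,b_0]\rangle$, whereas the full bracket expands as
\[
\{f,g\}(\mu)=\langle\mu,[a_0,b_0]\rangle+\langle\mu,[a_0,b_1]\rangle+\langle\mu,[a_1,b_0]\rangle+\langle\mu,[a_1,b_1]\rangle.
\]
The $\mathfrak{g}_0$-invariance of $g$ paired against $x=a_0\in\mathfrak{g}_0$, and of $f$ paired against $x=b_0$, give $\langle\mu,[a_0,b_1]\rangle=\langle\mu,[a_1,b_0]\rangle=-\langle\mu,[a_0,b_0]\rangle$; and since $\mathfrak{g}_1$ is a subalgebra (equivalently, $\mathfrak{g}_1 S(\mathfrak{g})$ is a Poisson subalgebra) one has $[a_1,b_1]\in\mathfrak{g}_1$, so $\langle\mu,[a_1,b_1]\rangle=0$ because $\mu\in\mathfrak{g}_1^\perp$. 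Substituting these relations and the hypothesis $\{f,g\}(\mu)=0$ into the expansion collapses it to $-\langle\mu,[a_0,b_0]\rangle=0$, i.e. $\{f_0,g_0\}(\mu)=0$; since $\mu$ is arbitrary, $\{f_0,g_0\}=0$.

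For the second equality no further pointwise work is needed. Using bilinearity together with $f_1=f-f_0$ and $g_1=g-g_0$,
\[
\{f_1,g_1\}=\{f,g\}-\{f,g_0\}-\{f_0,g\}+\{f_0,g_0\}.
\]
On the right $\{f,g\}=0$ by hypothesis and $\{f_0,g_0\}=0$ by the previous step, while $\{f,g_0\}=0$ and $\{f_0,g\}=0$ hold because $f,g\in S(\mathfrak{g})^{\mathfrak{g}_0}$ and $f_0,g_0\in S(\mathfrak{g}_0)$: a $\mathfrak{g}_0$-invariant element Poisson-commutes with every generator in $\mathfrak{g}_0$, hence, the bracket being a derivation, with every polynomial in $\mathfrak{g}_0$. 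Therefore $\{f_1,g_1\}=0$.

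The main obstacle is conceptual rather than computational: $S(p)$ is not a Poisson morphism, so one cannot simply push $\{f,g\}=0$ through the projection onto $S(\mathfrak{g}_0)$, and the two mixed terms $\langle\mu,[a_0,b_1]\rangle$ and $\langle\mu,[a_1,b_0]\rangle$ must be controlled by hand. This is precisely where the hypothesis enters in its exact form: it is $\mathfrak{g}_0$-invariance (not full $G$-invariance) that is available, and the decisive trick is to feed the $\mathfrak{g}_0$-parts $a_0,b_0$ of the differentials back into the invariance identities. The only additional structural input is that $\mathfrak{g}_1$ be a subalgebra, used to annihilate the remaining $(\mathfrak{g}_1,\mathfrak{g}_1)$-term against $\mu\in\mathfrak{g}_1^\perp$; this is guaranteed by the Poisson-subalgebra property of $\mathfrak{g}_1 S(\mathfrak{g})$ noted in the text.
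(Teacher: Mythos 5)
Your proof is correct, and it takes a genuinely different route from the paper's for the first vanishing. The paper never proves $\{f_0,g_0\}=0$ directly: it expands $\{f_1,g_1\}=\{f-f_0,g-g_0\}$, kills the mixed terms $\{f_0,g\}$ and $\{f,g_0\}$ by $\mathfrak g_0$-invariance exactly as you do, arrives at the identity $\{f_1,g_1\}=\{f_0,g_0\}$, and then concludes that both sides vanish at once because they lie in the two complementary Poisson subalgebras $S(\mathfrak g_0)$ and $\mathfrak g_1 S(\mathfrak g)$ of the decomposition \eqref{eq:aks2}, whose intersection is zero. You instead establish $\{f_0,g_0\}=0$ by a pointwise Lie--Poisson computation at $\mu\in\mathfrak g_1^\perp\cong\mathfrak g_0^*$, splitting the differentials as $df(\mu)=a_0+a_1$, $dg(\mu)=b_0+b_1$ and feeding the $\mathfrak g_0$-components back into the invariance identities, then deduce $\{f_1,g_1\}=0$ from the same four-term expansion. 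The paper's argument is shorter and purely algebraic, which is why it transfers verbatim to rational functions as the authors need later; your argument is longer but has two virtues: it exposes the geometric mechanism (restriction of invariant functions to the annihilator $\mathfrak g_1^\perp$, with $\langle\mu,[a_1,b_1]\rangle=0$ doing the decisive work), and it makes fully explicit the hypothesis that $\mathfrak g_1$ is a Lie subalgebra --- an assumption the paper uses only implicitly through the unproved remark that $\mathfrak g_1 S(\mathfrak g)$ is a Poisson subalgebra, which is exactly equivalent to $[\mathfrak g_1,\mathfrak g_1]\subseteq\mathfrak g_1$, as you correctly note. Your pointwise argument also extends to the rational setting (work on the dense open set where denominators are nonzero), so nothing is lost for the application to the chopping integrals.
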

\begin{proof}
Indeed, for $f,g\in S(\mathfrak{g})^{G_0}$ we compute
\bea
\{f_1, g_1\} = \{f - f_0, g - g_0\} = \{f, g\} - \{f_0, g\} - \{f, g_0\} + \{f_0, g_0\} = \{f_0, g_0\}.
\label{eq:aks}
\eea
Here $\{f,g\}=0$ by assumption and $\{f_0,g\}=0=\{f,g_0\}$ due to the condition that both $f$ and $g$ are $\mathfrak{g}_0$-invariant. But the opposite sides of \eqref{eq:aks} lie in different non-intersecting subalgebras of $S(\mathfrak{g})$ (in $S(\mathfrak{g}_0)$ on the right and in $\mathfrak{g}_1S(\mathfrak{g})$ on the left), hence both are zeroes.
\end{proof}
Clearly, this reasoning does not depend on the kind of functions that we use; below we shall apply this schema to the rational functions on $\mathfrak g$, i.e. to the ratios of two polynomials, rather than to the polynomial ones.

\section{First integrals of the Toda system}\label{sect:keylem12}
\subsection{Full symmetric Toda and AKS method}
The full symmetric Toda system naturally fits into the context of AKS scheme for the decomposition
\begin{equation}\label{eq:triangledec}
\mathfrak{sl}_n= \mathfrak{b}_n\oplus\mathfrak{so}_n.
\end{equation}
It turns out that in this case the map
\[
S(p)(S(\mathfrak{sl}_n)^{\mathfrak{b}})\rightarrow S(\mathfrak b_n)
\]
from the AKS scheme induces a maximal commutative subalgebra of integrals of the Toda system. By dimension counts, to prove this it is enough to show that the chopping integrals belong to its image, which we will now do.

So now we are going to construct a commutative family of $\mathfrak b_n$-invariant rational functions on $\mathfrak{sl}_n^*$. To this end consider the family of subalgebras $\mathfrak p_n^k\subseteq \mathfrak{sl}_n$, where
\[
\mathfrak p_n^k=\left\{X=(x_{ij})\in\mathfrak{sl}_n\mid x_{ij}=0,\ \mbox{if}\ i=1,2,\dots,k,\ \mbox{or}\ j=n-k+1,n-k+2,\dots,n\right\}.
\]
In other words, $\mathfrak p_n^k$ consists of the matrices $X\in\mathfrak{sl}_n$ of the form
\begin{equation}\label{eq:matrixx}
X=\begin{pmatrix}
0 & 0 & \dots & 0 & 0 &\dots & 0\\
\vdots & \vdots & \ddots & \vdots & \vdots &\ddots & \vdots\\
0 & 0 & \dots & 0 & 0 & \dots & 0\\
x_{k+1,1} & x_{k+1,2} &\dots & x_{k+1,n-k} & 0 &\dots & 0\\
x_{k+2,1} & x_{k+1,2} &\dots & x_{k+1,n-k} & 0 &\dots & 0\\
\vdots & \vdots & \ddots & \vdots & \vdots &\ddots & \vdots\\
x_{n1} & x_{n2} &\dots & x_{n,n-k} & 0 &\dots & 0
\end{pmatrix}
\end{equation}
Clearly, $\mathfrak p_n^k\subset \mathfrak p_n^q$ if $k\ge q$. Also remark that the algebra $\mathfrak p_n^k$ is commutative as long as $k\ge\left[\frac{n}{2}\right]$. In effect, sequence of subalgebras $\{\mathfrak p_n^k\},\,k=1,\dots,n$ is nested:
\[
\mathfrak{sl}_n=\mathfrak p_n^0\supset\mathfrak p_n^1\supset\mathfrak p_n^2\supset\dots\supset \mathfrak p_n^n,
\]
so the subalgebra $\mathcal A$ generated by the Poisson centres of all $\mathfrak p_n^k,\,k=1,\dots,n$ is a Poisson commutative subalgebra in $S(\sln)$. Here is a construction, that relates this subalgebra with the AKS method and the solutions of full symmetric Toda system on \sln. Let $P^k:\mathfrak{sl}_n\to\mathfrak p_n^k\subseteq \mathfrak{sl}_n$ be the natural projection to $\mathfrak p_n^k$; in somewhat informal terms $P^k$ consists of ``deleting'' the first $k$ rows and the last $k$ columns in the matrix. 
\begin{lemma}\label{lem:key}
Let $E_2,\dots,E_n$ be the second set of generators of the Poisson center in $S(\sln)$, see \eqref{eq:generate1}. Let $D_k$ be the differential operator
\[
\frac{\partial}{\partial x_{1n}}\frac{\partial}{\partial x_{2,n-1}}\dots\frac{\partial}{\partial x_{k,n-k+1}}:S(\sln)\to S(\sln).
\]
Then for all $k$ and $i$ the elements $E_{k,i}=D_k(E_i)$ commute with each other. In effect, they are in the subalgebra $\mathcal A$.
\end{lemma}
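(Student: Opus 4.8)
The plan is to read off the commutativity from the last sentence of the statement. Since the subalgebra $\A$ generated by the Poisson centres of the nested family $\mathfrak p_n^k$ is already known to be Poisson commutative, it suffices to prove the membership $E_{k,i}=D_k(E_i)\in\A$, and for this it is enough to show that, for each fixed $k$, the element $D_k(E_i)$ lies in the Poisson centre $\mathcal Z(S(\mathfrak p_n^k))$ of the subalgebra $\mathfrak p_n^k$. Recall that the inclusion $\mathfrak p_n^k\hookrightarrow\sln$ is that of a Lie subalgebra, so the induced map $S(\mathfrak p_n^k)\to S(\sln)$ is a Poisson map; hence every bracket below may be computed inside $S(\sln)$ with the Kirillov--Kostant structure \eqref{eq:poikks}.

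First I would identify $D_k(E_i)$ with the chopping data of \eqref{Pkn}. The operator $D_k$ is the composition of the partial derivatives in the anti-diagonal corner entries $x_{1n},x_{2,n-1},\dots,x_{k,n-k+1}$, each of which is an off-diagonal, and therefore free, coordinate. Applying Jacobi's formula $\partial_{ij}\det A=(-1)^{i+j}\det A_{\hat i\hat j}$ to $\det(L-\lambda\mathbbm 1)$ and iterating --- at the $j$-th step the entry $x_{j,n-j+1}$ still occurs, linearly, in the surviving minor, and its differentiation deletes exactly row $j$ and column $n-j+1$ --- I obtain
\[
D_k\bigl(\det(L-\lambda\mathbbm 1)\bigr)=\pm\det\bigl((L-\lambda\mathbbm 1)^{(k)}\bigr)=\pm\chi_k(L,\lambda).
\]
Comparing the coefficients of the powers of $\lambda$ on the two sides and using \eqref{eq:generate1} then shows that the $D_k(E_i)$ are, up to sign, the chopping coefficients $E_{k,m}$; in particular they are polynomials only in the entries $x_{ij}$ with $i\ge k+1$ and $j\le n-k$, i.e.\ exactly the coordinates of \eqref{eq:matrixx}, whence $D_k(E_i)\in S(\mathfrak p_n^k)$.

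The heart of the argument is to prove that these coefficients are $\mathfrak p_n^k$-invariant, i.e.\ that they lie in $\mathcal Z(S(\mathfrak p_n^k))$. Here I would use the matrix form of the bracket, $\{f,g\}=\sum_{ij}x_{ij}[\nabla f,\nabla g]_{ij}$, together with the fact that $E_i$ is a Casimir, so that $[\nabla E_i,L]=0$ and $[\nabla E_i,\nabla E_j]=0$. The device of the $D_k$'s enters through the ``anomaly'' identity
\[
\partial_{ab}\{f,g\}=\{\partial_{ab}f,g\}+\{f,\partial_{ab}g\}+[\nabla f,\nabla g]_{ab},
\]
obtained by differentiating the structure constants of \eqref{eq:poikks}. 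Applied to the Casimir $E_i$ it expresses the bracket of a single derivative $\partial_{ab}E_i$ with any coordinate purely through $\nabla E_i$, and iterating it along the corner positions of $D_k$ propagates the $\sln$-invariance of $E_i$ into a controlled family of correction terms. Because $\nabla\chi_k$ is supported in the block of \eqref{eq:matrixx} (rows $\ge k+1$, columns $\le n-k$) and the coordinates $x_{ab}$ of $\mathfrak p_n^k$ obey the same constraints $a\ge k+1$, $b\le n-k$, the plan is to show that all these correction terms cancel, giving $\{D_k(E_i),x_{ab}\}=0$ for every generator of $\mathfrak p_n^k$.

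I expect this last cancellation to be the main obstacle. Two features make it delicate: the corner directions $E_{1n},\dots,E_{k,n-k+1}$ differentiated by $D_k$ do not span an $\mathrm{ad}(\mathfrak p_n^k)$-stable subspace, so the invariance of $D_k(E_i)$ is not a formal consequence of the invariance of $E_i$ but a genuine cancellation produced by the special anti-diagonal placement of the derivatives; moreover the term carrying $\lambda$ in $\det((L-\lambda\mathbbm 1)^{(k)})$ sits on the ``middle'' diagonal of the block, so that $\chi_k$ is not literally a characteristic polynomial of $\mathfrak p_n^k$ but an argument-shifted one. I would therefore isolate the bookkeeping of the correction terms into a separate computational lemma (the role played by lemma \ref{lem:key2}), reducing the invariance to the single matrix identity $\sum_i x_{ib}(\nabla\chi_k)_{ia}=\sum_j x_{aj}(\nabla\chi_k)_{bj}$ for $a\ge k+1,\ b\le n-k$, which can be checked from $(\widetilde L-\lambda S)\,\mathrm{adj}(\widetilde L-\lambda S)=\chi_k\cdot\mathbbm 1$, where $\widetilde L_{ab}=x_{k+a,\,b}$ is the chopped block and $S$ carries the spectral parameter. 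Once invariance is established one has $D_k(E_i)\in\mathcal Z(S(\mathfrak p_n^k))\subseteq\A$, and the asserted commutativity of the $E_{k,i}$ follows from the commutativity of $\A$.
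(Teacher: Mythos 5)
Your proposal is correct, and up to the decisive step it follows the paper's own architecture: you reduce commutativity to membership $D_k(E_i)\in\mathcal Z(S(\mathfrak p_n^k))\subseteq\mathcal A$ using the nestedness of the $\mathfrak p_n^k$, you establish $D_k(E_i)\in S(\mathfrak p_n^k)$ by iterated Jacobi differentiation (the paper argues the same support statement from multilinearity of the determinant, cf.\ remark \ref{rem:chopdef}), and your ``anomaly identity'' is literally the paper's key formula \eqref{eq:computecommute}, since the paper's correction bracket $\{f,g\}_{a,b}$ equals $[\nabla f,\nabla g]_{ab}$. Where you genuinely diverge is in how the invariance is extracted. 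The paper runs an induction on $k$ in which the anomaly never needs to be cancelled: assuming $f=D_{k-1}(E_i)\in\mathcal Z(S(\mathfrak p_n^{k-1}))$, it tests $\partial f/\partial x_{k,n-k+1}$ only against $g\in S(\mathfrak p_n^{k})$, and then $\partial g/\partial x_{k,n-k+1}=0$ while every summand of $[\nabla f,\nabla g]_{k,n-k+1}=\sum_p\bigl(\partial_{kp}f\,\partial_{p,n-k+1}g-\partial_{p,n-k+1}f\,\partial_{kp}g\bigr)$ contains a derivative of $g$ in row $k$ or column $n-k+1$, hence vanishes identically. So the ``main obstacle'' you anticipate does not exist on the paper's route; it only appears because you iterate the anomaly against all of $S(\sln)$ instead of shrinking the test algebra together with $k$. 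Your alternative closing step, however, does work and is a genuinely different, non-inductive argument: your matrix identity is exactly the statement $\{\chi_k,x_{ab}\}=0$ for the generators $e_{ab}\in\mathfrak p_n^k$, and it follows in two lines from the two-sided adjugate identity. Indeed, with $\widetilde A=\widetilde L-\lambda S$, $S_{cd}=\delta_{k+c,d}$, one has $\partial_{x_{k+c,d}}\chi_k=\mathrm{adj}(\widetilde A)_{dc}$, so
\[
\sum_i x_{ib}\,\partial_{ia}\chi_k=\bigl(\mathrm{adj}(\widetilde A)\,\widetilde L\bigr)_{ab},\qquad
\sum_j x_{aj}\,\partial_{bj}\chi_k=\bigl(\widetilde L\,\mathrm{adj}(\widetilde A)\bigr)_{a-k,\,b-k},
\]
and substituting $\widetilde L=\widetilde A+\lambda S$ together with $\mathrm{adj}(\widetilde A)\widetilde A=\widetilde A\,\mathrm{adj}(\widetilde A)=\chi_k\mathbbm 1$ shows both sides equal $\chi_k\delta_{ab}+\lambda\,\mathrm{adj}(\widetilde A)_{a,b-k}$ (out-of-range entries read as zero, which also settles the edge cases $a>n-k$ and $b\le k$). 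What your route buys is a self-contained verification at each fixed level $k$, independent of the induction; what the paper's route buys is the one-derivative-per-step mechanism, which is reused later (the proof of lemma \ref{lem:key2} and section \ref{sect:chopqr1}). One small correction: lemma \ref{lem:key2} does not play the bookkeeping role you assign it --- it establishes $\mathfrak b_n$-semi-invariance of the $E_{k,i}$ with the common character $c_k$, which is what the AKS step needs to form the invariant ratios, and is independent of the cancellation discussed here.
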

The commutativity of the chopping functions $E_{k,i}$ (see the remark \ref{rem:chopdef} below for their original definition) was originally proved by Deift, Nando, Li and Tomei by a straightforward computation; the role of parabolic subalgebras of \sln\ was further observed by different authors, first of all for the full Kostant-Toda system (see \cite{ErcolFlaSin}). Alternatively, this property can be derived as a special case of the Gaudin system integrals, see \cite{T1}. Here we give a proof of the commutativity both for the sake of completeness of our text and also because later we shall use some of the properties of this construction. Besides this, we believe that the approach based on the use of differential operators is not only simple and elegant, but can be of much help if one wants to further generalise this construction.
\begin{proof}

Observe that $D_k=\frac{\partial}{\partial x_{k,n-k+1}}\circ D_{k-1}$. Also it's clear that $D_k(E_i)\in S(\mathfrak p_n^k)\subseteq S(\sln)$ for all $i$ and $k$: indeed $E_k$ depends linearly on every variable $x_{pq}$ and every monomial in $E_k$ contains only one element from each row and column of this matrix, see equation \eqref{eq:generate1}. Hence by induction it is sufficient to show that $\frac{\partial}{\partial x_{k,n-k+1}}(f)$ Poisson-commutes with all elements in $S(\mathfrak p_n^k)$ for $f\in\mathcal Z(S(\mathfrak p_n^{k-1}))$. 

So we take any $g\in S(\mathfrak p_n^{k}),\,f\in\mathcal Z(S(\mathfrak p_n^{k-1}))$ and compute:
\begin{equation}\label{eq:computecommute}
0=\frac{\partial}{\partial x_{k,n-k+1}}\{f,g\}=\left\{\frac{\partial f}{\partial x_{k,n-k+1}},g\right\}+\left\{f,\frac{\partial g}{\partial x_{k,n-k+1}}\right\}+\{f,g\}_{k,n-k+1}.
\end{equation}
Here $\{,\}_{k,n-k+1}$ denotes the Poisson bracket with respect to the bivector $\frac{\partial}{\partial x_{k,n-k+1}}\pi$, where $\pi$ is the Poisson bivector, that determines the structure \eqref{eq:poikks}; a direct computation shows
\[
\{f,g\}_{i,j}=\sum_{p=1}^n\left\{\frac{\partial f}{\partial x_{i,p}}\frac{\partial g}{\partial x_{p,j}}-\frac{\partial f}{\partial x_{p,j}}\frac{\partial g}{\partial x_{i,p}}\right\}
\]
for all $i,j=1,\dots,n$. Now taking $i=k,\,j=n-k+1$ we see that $\{f,g\}_{k,n-k+1}=0$ since $\frac{\partial g}{\partial x_{p,n-k+1}}=0=\frac{\partial g}{\partial x_{k,p}}$ for all $p$, since $g\in S(\mathfrak p_n^{k})$ so it doesn't depend on these variables, hence the second and the third terms in the equation \eqref{eq:computecommute} vanish and we have: $\left\{\frac{\partial f}{\partial x_{k,n-k+1}},g\right\}=0$.
\end{proof}
\begin{remark}\label{rem:chopdef}\rm
It turns out that the elements $D_k(E_i)$ do up to a sign coincide with the numerators and denominators of the ``chopping integrals'' due to Deift and others, \cite{DLNT}, see section \ref{sect:chop1}: the matrix $(L-\lambda\mathbbm 1)^{(k)}$ is in fact equal to the nonzero part of the projection $P^k(L-\lambda\mathbbm 1)$. Then
\begin{equation}\label{eq:altchop1}
\chi_k(\lambda)=\det\left((L-\lambda\mathbbm 1)^{(k)}\right)=(-1)^{(n-1)k}D_k(\det(L-\lambda\mathbbm 1))=(-1)^{(n-1)k}\sum_{i=0}^n (-1)^{n-i}\lambda^{n-i}D_k(E_i).
\end{equation}
The fact that the degree in $\lambda$ of the polynomial drops by $2k$ and not by $k$ under the action of $D_k$, as one may expect, follows from the condition that when one differentiate by $x_{\ell,n-\ell+1}$ only the monomials that contain this variable will survive.
\end{remark}
\begin{lemma}\label{lem:key2}
The elements $E_{k,i}$ are semi-invariants of the natural $\mathfrak b_n$-action, i.e. there exists a character $c_k:\mathfrak b_n\to \mathbb C$ for which $E_{k,i}^b=c_k(b)E_{k,i}$ for all $b\in\mathfrak b_n$; moreover the character $c_k$ does not depend on $i$.
\end{lemma}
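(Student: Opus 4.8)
The plan is to deduce the semi-invariance of $E_{k,i}=D_k(E_i)$ from the full $SL_n$-invariance of the generators $E_i$ by commuting the operator $D_k$ past the infinitesimal $\mathfrak{b}_n$-action. I read the natural $\mathfrak{b}_n$-action as the Poisson (coadjoint) action: identifying $b\in\mathfrak{b}_n$ with the linear function $b=\sum_{pq}b_{pq}x_{pq}$, one has $E_{k,i}^b=\{b,E_{k,i}\}$, so the assertion becomes $\{b,E_{k,i}\}=c_k(b)\,E_{k,i}$. Since the $E_i$ lie in the Poisson centre, $\{b,E_i\}=0$ for every $b$, which furnishes the base case $c_0=0$ of an induction on $k$.

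The engine is the Leibniz-type identity for the derivative of a Poisson bracket already used in Lemma \ref{lem:key}, namely
\[
\partial_{ij}\{f,g\}=\{\partial_{ij}f,g\}+\{f,\partial_{ij}g\}+\{f,g\}_{ij},\qquad \{f,g\}_{ij}=\sum_{p=1}^n\Big(\tfrac{\partial f}{\partial x_{ip}}\tfrac{\partial g}{\partial x_{pj}}-\tfrac{\partial f}{\partial x_{pj}}\tfrac{\partial g}{\partial x_{ip}}\Big).
\]
Setting $F=D_{k-1}(E_i)\in S(\mathfrak{p}_n^{k-1})$ and applying $\partial_{k,n-k+1}$ to the inductive hypothesis $\{b,F\}=c_{k-1}(b)F$, I would analyse the three resulting terms. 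The first, $\{\partial_{k,n-k+1}b,F\}$, vanishes because $\partial_{k,n-k+1}b=b_{k,n-k+1}$ is a scalar and Poisson-commutes with everything; the second is precisely $\{b,\partial_{k,n-k+1}F\}=\{b,E_{k,i}\}$, the quantity I wish to isolate; and the right-hand side differentiates to $c_{k-1}(b)\,E_{k,i}$ since $c_{k-1}(b)$ does not depend on the coordinates $x_{pq}$.

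Everything therefore reduces to the correction term $\{b,F\}_{k,n-k+1}$, and evaluating it is the one genuinely computational, and crucial, step. Here two support conditions combine: because $b$ is lower triangular, $\partial b/\partial x_{kp}=b_{kp}$ is supported on $p\le k$ while $\partial b/\partial x_{p,n-k+1}=b_{p,n-k+1}$ is supported on $p\ge n-k+1$; and because $F\in S(\mathfrak{p}_n^{k-1})$ does not involve the first $k-1$ rows or the last $k-1$ columns, the factor $\partial F/\partial x_{p,n-k+1}$ forces $p\ge k$ and $\partial F/\partial x_{kp}$ forces $p\le n-k+1$. Intersecting the constraints pins the summation index to $p=k$ in the first sum and to $p=n-k+1$ in the second, so that only the diagonal entries of $b$ survive:
\[
\{b,F\}_{k,n-k+1}=\big(b_{kk}-b_{n-k+1,n-k+1}\big)\,\partial_{k,n-k+1}F=\big(b_{kk}-b_{n-k+1,n-k+1}\big)\,E_{k,i}.
\]

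Collecting the three terms gives the recursion $\{b,E_{k,i}\}=\big(c_{k-1}(b)+b_{n-k+1,n-k+1}-b_{kk}\big)E_{k,i}$, hence with $c_0=0$ the closed form $c_k(b)=\sum_{\ell=1}^{k}\big(b_{n-\ell+1,n-\ell+1}-b_{\ell\ell}\big)$. This is linear in $b$ and depends only on the diagonal of $b$, so it vanishes on $[\mathfrak{b}_n,\mathfrak{b}_n]$ and is a bona fide character; crucially it carries no dependence on $i$, which is the final clause of the lemma. I expect the main obstacle to be purely a matter of bookkeeping, namely correctly intersecting the two support conditions in the correction term so that exactly the diagonal entries $b_{kk}$ and $b_{n-k+1,n-k+1}$ remain, rather than any conceptual difficulty; as a consistency check, in the degenerate case $k=n-k+1$ the two surviving indices coincide and the contribution automatically cancels.
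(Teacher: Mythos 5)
Your proof is correct, but it takes a genuinely different route from the paper's. The paper argues in one shot, without induction: writing $E_{k,i}^b=D_k^b(E_i)$ (using the $SL_n$-invariance of $E_i$), it computes the conjugate action of $b\in\mathfrak b_n$ on the constant-coefficient operator $\frac{\partial}{\partial x_{k,n-k+1}}$, obtaining the diagonal term $-(b_{kk}-b_{n-k+1,n-k+1})\frac{\partial}{\partial x_{k,n-k+1}}$ plus cross terms proportional to $\frac{\partial}{\partial x_{i,n-k+1}}$ and $\frac{\partial}{\partial x_{k,n-k+j+1}}$, and then kills the cross terms using the multilinearity of the $E_i$, i.e.\ the identities $\frac{\partial^2E_i}{\partial x_{pr}\partial x_{ps}}=0=\frac{\partial^2E_i}{\partial x_{uq}\partial x_{tq}}$ (each cross term repeats a row or column already differentiated inside $D_k$). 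You instead induct on $k$, recycling the differentiated-bracket identity $\partial_{ij}\{f,g\}=\{\partial_{ij}f,g\}+\{f,\partial_{ij}g\}+\{f,g\}_{ij}$ from the proof of Lemma \ref{lem:key} with $f=b$ linear, and your evaluation of the correction term is sound: the support constraints ($b$ lower triangular, $F=D_{k-1}(E_i)\in S(\mathfrak p_n^{k-1})$) do pin the summation index to $p=k$ and $p=n-k+1$, yielding $\{b,F\}_{k,n-k+1}=(b_{kk}-b_{n-k+1,n-k+1})E_{k,i}$, the recursion $c_k(b)=c_{k-1}(b)+b_{n-k+1,n-k+1}-b_{kk}$, and hence the same character as in the paper --- note that the paper's display \eqref{eq:equivarprop} contains an evident typo (the summand should depend on the summation index $r$), and your closed form $c_k(b)=\sum_{\ell=1}^{k}(b_{n-\ell+1,n-\ell+1}-b_{\ell\ell})$ is what is meant there. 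Your reading of the ``natural $\mathfrak b_n$-action'' as $E^b=\{b,E\}$ is also the intended one, since the Lie--Poisson bracket with a linear function is exactly $\mathrm{ad}_b$ extended as a derivation. Each approach buys something: the paper's makes it transparent that the character is induced by the operator $D_k$ itself (hence manifestly independent of $i$) and avoids induction; yours dispenses with the multilinearity identities altogether, using only the support property $D_{k-1}(E_i)\in S(\mathfrak p_n^{k-1})$ already established in Lemma \ref{lem:key}, so it would apply verbatim to any invariants whose $D_{k-1}$-images land in $S(\mathfrak p_n^{k-1})$, not just to the coefficients of the characteristic polynomial.
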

\begin{proof}
To show this, we observe that since $E_{m},\,m=1,\dots,n$ is in the Poisson center of $S(\mathfrak{sl}_n)$ so it is an $SL_n$-invariant polynomial. Hence it is in particular $\mathfrak b_n$-invariant.We have
\[
E_{k,i}^b=(D_k(E_{i}))^b=D_k^b(E_{i}^b)=D_k^b(E_{i}),
\]
where $D_k=\frac{\partial}{\partial x_{1n}}\frac{\partial}{\partial x_{2,n-1}}\dots\frac{\partial}{\partial x_{k-1,n-k+1}}$ is the operator that we defined above. Now, the action of $\mathfrak b_n$ on the partial differential operators with constant coefficients on $\mathfrak{sl}_n$, can be identified with the minus the action of $\mathfrak b^T_n$ (the conjugate Borel subalgebra of $\mathfrak b_n$ in $\mathfrak{sl}_n$, i.e. the algebra of upper triangular matrices) on $\mathfrak{sl}_n$ so that $\frac{\partial}{\partial x_{pq}}$ corresponds to the generator $e_{pq}$; this follows from the identification of the partial derivation operation $\frac{\partial}{\partial x_{pq}}$ with the derivation of the free commutative algebra $S(\sln)$, induced by the pairing of $\sln$ with the element of the dual basis $e^{pq}$, so the action of any matrix $b\in\sln$ on partial derivations on $S(\sln)$ now coincides with the dual action of $b$ on $\sln^*$. So for any $b\in\mathfrak b_n$ we can compute:
\[
\left(\frac{\partial}{\partial x_{k,n-k+1}}\right)^b=-(b_{kk}-b_{n-k+1,n-k+1})\frac{\partial}{\partial x_{k,n-k+1}}+\sum_{i=1}^{k-1}\lambda_i\frac{\partial}{\partial x_{i,n-k+1}}+\sum_{j=1}^{k-1}\mu_j\frac{\partial}{\partial x_{k,n-k+j+1}}
\]
for certain coefficients $\lambda_i,\,\mu_j$. However, since for any $p,q,r,s,t,u$ and $i$
\[
\frac{\partial^2E_i}{\partial x_{pr}\partial x_{ps}}=0=\frac{\partial^2E_i}{\partial x_{uq}\partial x_{tq}},
\]
we have
\begin{equation}\label{eq:equivarprop}
E_{k,i}^b=-\left(\sum_{r=1}^{k}(b_{kk}-b_{n-k+1,n-k+1}) \right)E_{k,i}.
\end{equation}
Here the expression $-\sum_{r=1}^{k}(b_{kk}-b_{n-k+1,n-k+1})=c_k(b)$ is the character on $\mathfrak b_n$. It is evident, that this character is common for all $E_{k,1},\,E_{k,2},\dots,E_{k,n}$; in effect, it is induced from the operator $D_k$, common for all these functions.
\end{proof}
Since the character $c_k$ does not depend on the second index, we can obtain invariants of the action simply by taking the ratios $E_{k,\ell}/E_{k,m}$ for different $\ell$ and $m$. Then applying the Lemma \ref{lem:aks} gives the Deift \textit{chopping integrals} of the Toda system (the signs that appear in the formula \eqref{eq:altchop1} cancel out): these are rational functions on $\bn^*\cong Symm_n(\R)$, they commute with each other and with the Toda flow, since the latter corresponds to the function $\frac12Tr(L^2)$ on $\sln^*$, which is in the Poisson center of $S(\sln)$. Applying the AKS construction now gives the commutative family of the rational first integrals of the Toda system.

\subsection{Chernyakov-Sorin family}\label{sect:CSRepr}
There exist many other approaches to the construction of the integrals of the full symmetric Toda system. They give the integrals, slightly different from the ones defined by Deift's chopping procedure, which commute with the Toda flow, but are in general not in involution.

For example, see \cite{CS}, instead of the matrix $L-\lambda\mathbbm 1$, one can consider the matrices $L^i$, so that $K_i(L)=\det(L^i)=(\det L)^i$. Then $D_k(K_i(L))$ does not belong to the subalgebras $S(\mathfrak p^k_n)\subset S(\sln)$ if $i>1$, so it is not true that $D_k(K_i)$ Poisson commute with each other or with the Toda flow. However, we can do the following trick: put
\[
K_{k,i}(L)=\det((L^i)^{(k)}),
\]
where as before for $A\in Mat_n(\R)$ we denote by $A^{(k)}=P^k(A)$ its submatrix, spanned by the last $k$ rows and the first $k$ columns. Then the functions $K_{k,i}(L)$ also are semiinvariants of the Toda flow, and their ratios $I^{cs}_{k,i}(L)=\frac{K_{k,i}(L)}{K_{k1}(L)}$ are genuine integrals for the principal Toda flow. The proof of this statement can be obtained by consideration of Pl\"ucker coordinates, see \cite{CS}.

In effect the functions $\frac{K_{k,i}(L)}{K_{k,1}(L)}$ are invariant not just of the Toda flow, but they are invariant with respect to the adjoint action by \bn; this can be proved analogously to the above reasonings or also c.f. the arguments of section 2.4 of \cite{T1}. We provide here another demonstration based on representation theory methods, similar to the construction of \cite{CSS19}.
\begin{lemma}
\label{lem_binv} Let $\mathfrak{K}$ denotes the subalgebra in $Q(\sln)$, generated by the ratios $\frac{K_{k,i}(L)}{K_{k1}(L)}$; then
\[
\mathfrak{K}\subset Q(\mathfrak{sl}_n)^{\mathfrak{b}}.
\]
\end{lemma}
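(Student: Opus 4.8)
The plan is to reinterpret the chopped determinants as matrix elements in a fundamental representation and to read off their $\bn$-weight from the representation theory of $\sln$. Write $V=\C^n$ with standard basis $e_1,\dots,e_n$ and consider the exterior power $W=\Lambda^k V$, on which $\sln$ acts as the $k$-th fundamental representation. For a subset $S=\{s_1<\dots<s_k\}$ put $e_S=e_{s_1}\wedge\dots\wedge e_{s_k}$. The elementary fact I would use is that for any $n\times n$ matrix $M$ the minor of $M$ on rows $R$ and columns $C$ (with $|R|=|C|=k$) equals the matrix element $\langle e_R^*,(\Lambda^k M)e_C\rangle$; combined with the functoriality $\Lambda^k(M^i)=(\Lambda^k M)^i$ this gives, for $R=\{n-k+1,\dots,n\}$ and $C=\{1,\dots,k\}$,
\[
K_{k,i}(L)=\langle e_R^*,(\Lambda^k L)^i\,e_C\rangle ,
\]
possibly up to an overall sign that is the same for every $i$ and hence irrelevant for the ratios.

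Next I would pin down the $\bn$-action. As already exploited in the proof of Lemma \ref{lem:key2}, the action of $\bn$ on the coordinate matrix $L$ is not the naive conjugation but, after passing to the dual picture, conjugation by the transposed group elements: for $g$ in the lower triangular group the induced transformation is $L\mapsto g^{T}L(g^{T})^{-1}$, i.e. conjugation by the upper triangular matrix $h=g^{T}$. Hence on $W$ it is implemented by $\Lambda^k h$, and
\[
K_{k,i}(g\cdot L)=\langle e_R^*,(\Lambda^k h)(\Lambda^k L)^i(\Lambda^k h)^{-1}e_C\rangle .
\]
Now $e_C=e_1\wedge\dots\wedge e_k$ is the highest weight vector of $W$ for the upper Borel $B^{+}$ (it is annihilated by the strictly upper nilradical $\mathfrak n_+$), so the line $\C e_C$ is $B^{+}$-stable and $(\Lambda^k h)^{-1}e_C=\chi_C(h)^{-1}e_C$; dually $e_R=e_{n-k+1}\wedge\dots\wedge e_n$ is the lowest weight vector, so $e_R^*$ spans a $B^{+}$-stable line in $W^*$ and $\langle e_R^*,(\Lambda^k h)\,\cdot\,\rangle=\chi_R(h)\langle e_R^*,\,\cdot\,\rangle$, where $\chi_R,\chi_C$ are the torus characters by which $h$ acts on $e_R,e_C$.

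Putting these together yields
\[
K_{k,i}(g\cdot L)=\chi_R(h)\chi_C(h)^{-1}\,K_{k,i}(L),
\]
with a proportionality factor that does not depend on $i$. Thus each $K_{k,i}$ is a $\bn$-semi-invariant with one and the same character, exactly in the spirit of Lemma \ref{lem:key2}, and the ratios $K_{k,i}/K_{k,1}$ are genuine $\bn$-invariants, which is the assertion $\mathfrak K\subset Q(\sln)^{\mathfrak b}$. The step I expect to require the most care is the second one: correctly identifying the $\bn$-action on $L$ with conjugation by the transpose $h=g^{T}$, and then verifying that the two extremal (co)vectors $e_C$ and $e_R^*$ really are annihilated by $\mathfrak n_+$, so that only the torus characters survive. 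Once the matrix-element reformulation and this action are in place, the remainder is formal.
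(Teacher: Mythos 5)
Your proposal is correct and takes essentially the same route as the paper's proof: both reinterpret $K_{k,i}(L)$ as the extremal matrix element of $(\Lambda^k L)^i$ between the highest and lowest weight vectors of the $k$-th fundamental representation $\Lambda^k V$ (using $\Lambda^k(L^i)=(\Lambda^k L)^i$), and both conclude that all $K_{k,i}$ for fixed $k$ are semi-invariant with one and the same character, so the ratios are invariant. The only differences are presentational: the paper splits the Borel into its unipotent part (full invariance, via the $SO_n$-invariant inner product and $\langle\rho(X)v,w\rangle=\langle v,\rho(X^T)w\rangle$) and its Cartan part (common character ${c_k}_+/{c_k}_-$), whereas you handle the whole Borel at once through the two $B^+$-stable lines spanned by $e_C$ and $e_R^*$, and you make explicit the transpose twist in the $\bn$-action that the paper leaves implicit by proving invariance under the adjoint action of $N_+$.
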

\begin{proof}
Let us consider a finite-dimensional irreducible representation $\rho: SL_n(\R) \rightarrow End(V)$ with $v_+$ - the highest weight vector and $v_-$ - the lowest weight vector. These vectors are characterised by the conditions:
\begin{equation}\label{eq:nilpotact}
\rho(\mathfrak{n}_+) v_+=0;\qquad \rho(\mathfrak{n}_- )v_-=0
\end{equation}
where $\mathfrak{n}_{\pm}$ are the corresponding upper and lower nilpotent triangular subalgebras and we denote by the same symbol $\rho$ the representation of the Lie algebra \sln, induced by $\rho$. Let us now choose the $SO_n(\R)$-invariant inner product on $V$, denoted $\langle,\rangle$. Then the following conditions are satisfied:
\bea
\langle\rho(X)v,w\rangle=\langle v,\rho(X^T)w\rangle.\nn
\eea
We now consider a special subset of matrix elements in the representation
\bea
F^\rho(X)=\langle\rho(X) v_+,v_-\rangle.
\eea
These functions are invariant with respect to the adjoint action of the unipotent subgroup $N_+$ of upper triangular matrices with $1$ on diagonal of $SL_n(\R)$; here we define the adjoint action as
\bea
Ad_g X=g^{-1} X g.\nn
\eea
Indeed, since the conditions on the action of $\mathfrak n_\pm$ on $v_\pm$ (see \eqref{eq:nilpotact}) integrate to the equality $\rho(g)(v_\pm)=v_\pm$ for $g\in N_\pm$, we have
\bea
Ad^*_g(F^\rho)(X)=\langle\rho(g^{-1}Xg)v_+,v_-\rangle=\langle\rho(X)\rho(g)v_+,\rho((g^{-1})^T)v_-\rangle=\langle\rho(X)v_+,v_-\rangle.\nn
\eea
We can now reinterpret the Chernyakov-Sorin integrals in this framework: it turns out that the semiinvariants $K_{k,i}(L)$ are related to the functions $F^\rho$ with $\rho$ being the highest weight irreducible subrepresentation in $\wedge^k(S^j V_0)$ where $V_0$ is the fundamental representation with the highest weight $(1,0,\ldots,0)$.  First of all let us consider the exterior power of the fundamental representation $\wedge^k(V_0)$; we denote it by $\rho_k$. This representation has the weight basis given by $e_{i_1}\wedge\ldots\wedge e_{i_k},\,1\le i_1<i_2<\dots<i_k\le n$ where $e_i$ are vectors from the weight basis in $V_0$ with $v_+=v_1$ and $v_-=v_n$. Then the highest weight vector in $\wedge^k(V_0)$ is $e_1\wedge\ldots\wedge e_k$ and the lowest one is $e_{n-k+1}\wedge\ldots\wedge e_n$. The left lowest minor $\Delta_k (X)$ is the matrix element of $X^{\wedge k}$ corresponing to $F^{\rho_k}$. Then as it is easy to see that $\Delta_k(X^i)$ is equal to $F^{\rho_k}(X^i)$.

Let us demonstrate that for all $i$ and fixed $k$ the functions $K_{k,i}(L)$ are semi-invariants with the same character $c_k$ with respect to the adjoint action of the Cartan subgroup $T_n$ (the subgroup of diagonal matrices in $SL_n(\R)$. Indeed, the action of $T_n$ on $K_{k,i}(L)$ is induced from its action on $v_{\pm}$ which is the highest and the lowest weights in $\wedge^k(V_0)$; on the other hand for all $t\in T_n$ we have
\bea
\rho_k(t) v_\pm ={c_k}_\pm(t) v_\pm,\nn
\eea
where ${c_k}_\pm$ are the weights of the vectors $v_\pm$ in the representation $V_0$. Hence
\bea
Ad_t^*(F^{\rho_k})(X)=\langle\rho_k(X)\rho_k(t) v_+,\rho_k(t^{-1})v_-\rangle=\frac{{c_k}_+}{{c_k}_-} \langle\rho_k(X) v_+,v_-\rangle=\frac{{c_k}_+}{{c_k}_-} F^{\rho_k}(X).\nn
\eea
As one readily sees, we get the same character $c_k=\frac{{c_k}_+}{{c_k}_-}$ for the functions $\Delta_k(X^i)$ for all $i$:
\[
\begin{aligned}
Ad_t^*(F^{\rho_k})(X^i)&=\langle(\rho_k(t^{-1})\rho_k(X)\rho_k(t))^i v_+,v_-\rangle=\langle\rho_k(t^{-1})\rho_k(X^i)\rho_k(t) v_+,v_-\rangle\\
&=\frac{{c_k}_+}{{c_k}_-}\langle\rho_k(X^i) v_+,v_-\rangle=\frac{{c_k}_+}{{c_k}_-} F^{\rho_k}(X^i).
\end{aligned}
\]
\end{proof}

\begin{remark}\rm
The Toda system is superintegrable in the sense of Nekhoroshev (see for instance \cite{CS}). It is closely related to the classical question of describing the ring of rational functions on $G/AdB$, see for example \cite{RS}. Such functions are ratios of semi-invariants with the same weights. On the other hand, the semi-invariants are given by the highest vectors in the regular representation in the group algebra $C[G]=\bigoplus_{\lambda} V_{\lambda}\otimes V_{\lambda^*}$ (due to the Peter-Weyl theorem \cite{Hall}). The latter can be obtained by using the Littlewood-Richardson rule for representations $V_{\lambda}\otimes V_{\lambda^*}$.
\end{remark}



\section{Lax representations and solutions}\label{sect:laxqr}
One of the important constructions that allow one find solutions of the integrable systems is the \textit{Lax representation} of the equation; earlier we have described the canonical Lax representation of the Toda system, see equation \eqref{eq:laxtoda}. In this section we are going to describe the Lax representations for the DLNT (chopping) integrals. We begin with considering the examples of such Lax representations and describe the way they lead to the solution of the corresponding flows in terms of the QR-decomposition.
\subsection{Example: the Lax representation of a DLNT flow for the Toda system on $\mathfrak{sl}_6$}\label{sect:sl6}
Let us consider the $n=6$ case. We are going to give the Lax representation of the flow corresponding to the Chernyakov-Sorin hamiltonian
\bea
I^{cs}_{1,2}(L)=\frac {(L^3)_{16}}{a_{16}},\nn
\eea
where $(L^3)_{ij}$ are the matrix elements of $L^3=L \cdot L \cdot L$. Recall that the Lax matrix $L\in Symm_6$ is the symmetric matrix with entries $a_{ij}$, so that $a_{ij}=a_{ji}$.

The integral $I^{cs}_{1,2}(L)$ and the integral
$$I_{1,2}(L) = \frac{E_{1,2}(L)}{E_{1,0}(L)}$$
obtained by chopping procedure are functionally dependent as the integrals of the Toda flow (see Appendix A). The direct computations show (see Appendix B) that the dynamics associated with this function has the following Lax representation
\beq{Flows}
\mathcal{T}_{I^{cs}_{1,2}}(L)=\mathcal{T}_{\frac12TrL^{2}}(L)+\mathcal{T}_{I_{1,2}}(L),
\eq
As one readily sees the first term in this equality is just the Toda vector field. In coordinate form, we have the following equality
$$\frac{dL}{dt_{I^{cs}_{1,2}}}=[M(L),L]+[M_{I_{1,2}}(L),L],$$
where we use the notation from formula \eqref{eq:moper}, $M(L)=L_+-L_-$ and $M_{I_{1,2}}(L)$ being the matrix of the form
\bea
M_{I_{1,2}}(L)=\frac 1 {a_{16}} \left(
\begin{array}{rrrrrr}
0 & 0 & 0 & 0 & 0 & 0\\
0 & 0 & L^{36}_{12} & L^{46}_{12} & L^{56}_{12} & 0\\
0 & - L^{36}_{12} & 0 & L^{46}_{13} & L^{56}_{13} & 0\\
0 & - L^{46}_{12} & - L^{46}_{13} & 0 & L^{56}_{14} & 0\\
0 & - L^{56}_{12} & - L^{56}_{13} & - L^{56}_{14} & 0 & 0\\
0 & 0 & 0 & 0 & 0 & 0
\end{array}
\right)= \frac 1 {a_{16}} M_h.
\eea
Here and further in this paper for a matrix $A$ we denote by $A^{q_1\dots q_k}_{p_1\dots p_k}$ the determinant of the submatrix in $A$, spanned by the intersections the rows $q_1,\dots,q_k$ and the columns $p_1,\dots,p_k$.
The denominator matrix $M_h$ in the formula for $M_{I_{1,2}}(L)$ has the following representation-theoretic interpretation. Let $V=\R^6$ be the fundamental representation of $SL_6(\R).$ We denote by $\rho$ the representation map $\rho: SL_6(\R)\rightarrow End(V)$. Then denote by $\wedge^2\rho$ the map
\bea
\wedge^2\rho: SL_6(\R)\rightarrow End(\wedge^2 V)\nn
\eea
which acts as
\bea
\wedge^2 \rho(X): v_1\wedge v_2\mapsto \rho(X)v_1\wedge \rho(X) v_2.\nn
\eea
Let us recall that the matrix element $a_{16}$ has the natural representation-theoretic interpretation
\bea
a_{16}(L)=\langle\rho(L) v_-,v_+\rangle\nn
\eea
Now the Lax matrix $M_h$ as a whole has a similar interpretation. Consider the operator $\wedge^2 \rho(L)$ as an element of
\[
End(V\wedge V)=(V\wedge V)^*\otimes(V\wedge V)\subset (V^*\otimes V^*) \otimes (V\otimes V).
\]
Now there is a natural operation
\[
j:(V^*\otimes V^*) \otimes (V\otimes V)\to V^*\otimes V=End(V),
\]
given by
\bea
j((f\otimes g)\otimes(x\otimes y))=g(v_+) \langle x, v_-\rangle f\otimes y\in V\otimes V^*.\nn
\eea
Then one readily sees that
\bea
M_h= M(j(\rho^2(L))),
\eea
where $M$ is the projector \eqref{eq:moper}, see remark \ref{rem:moper1}.

\subsection{Equivariant tensor powers}
Building on the observations we just made, we get the following construction. Let $\{ e_i \}$ be the standard basis and let $\langle,\rangle$ be the standard scalar product in $\R^n$. Consider the homogenous polynomial function of the Lax matrix $L=(a_{ij})$ of degree $k-1$, given by the following determinant:
\bea
C_k(L)=\left|
\begin{array}{ccc}
\langle Le_{n-k+2}, e_1\rangle & \ldots & \langle L e_n, e_1\rangle\\
\vdots &\ddots & \vdots \\
\langle Le_{n-k+2}, e_{k-1}\rangle & \ldots & \langle L e_n, e_{k-1}\rangle
\end{array}
\right|.\nn
\eea
Here we assume that $n-k+2\le n$, so that $C_1(L)=1$; in effect, this function can be defined for any matrix $L\in\sln$, not necessarily for the symmetric matrix. In a similar way we define a rational matrix-valued function $\tau_k:\sln\to Mat_n(\R)\cong\gl_n$: for any $v\in\R^n$ we put
\bea
\tau_k(L)(v)=\frac{1}{C_k(L)}
\left|
\begin{array}{cccc}
Lv & L e_{n-k+2} & \ldots  & L e_n \\
\langle L v, e_1\rangle & \langle Le_{n-k+2}, e_1\rangle & \ldots & \langle L e_n, e_1\rangle \\
\vdots & & & \vdots \\
\langle L v, e_{k-1}\rangle & \langle Le_{n-k+2}, e_{k-1}\rangle & \ldots & \langle L e_n, e_{k-1}\rangle
\end{array}
\right|
\label{tau}
\eea
for any $v\in\R^n$. Observe that by definition
\bea
\tau_1(L)=L.\nn
\eea
As an example, let us compute the matrix $\tau_2(L)$: first, we have by definition
\[
C_2(L)=\langle Le_n,e_1\rangle =a_{1n}.
\]
Also
\[
\tau_2(L)(v)=\langle L(e_n),e_1\rangle Lv-\langle L v, e_1\rangle Le_n=a_{1n}Lv-\langle L v, e_1\rangle a_{jn}e_j,
\]
so that
\[
\tau_2(L)(e_i)=(a_{1n}a_{ji}-a_{1i}a_{jn})e_j.
\]
In other words: $\tau_2(L)_{ji}=-L^{1j}_{in}$, in particular $\tau_2(L)_{ji}=0$ if $j=1$ or $i=n$. Similarly
\begin{equation}\label{eq:matrixcoef1}
\tau_k(L)_{ji}=(-1)^{k-1}L^{1,2,\dots,k-1,j}_{i,n-k+2,\dots,n},
\end{equation}
where the corresponding determinant vanishes, if the indices are repeating.

The following lemma describes one of the important properties of $\tau_k(L)$:
\begin{lemma}
The function $\tau_k(L)$ is covariant with respect to the adjoint action of the lower triangular group $B^-_n$, i.e.
\bea
\tau_k(bLb^{-1})=b \tau_k(L) b^{-1}.\nn
\eea
\end{lemma}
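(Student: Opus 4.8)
The plan is to prove the equivalent pointwise statement $\tau_k(bLb^{-1})(v)=b\,\tau_k(L)(b^{-1}v)$ for every $v\in\R^n$ and every $b\in B^-_n$. Writing $L'=bLb^{-1}$, $w=b^{-1}v$, and denoting by $D(L,v)$ the vector-valued determinant in the numerator of \eqref{tau} (so that $\tau_k(L)(v)=D(L,v)/C_k(L)$), I would substitute $L'$ into \eqref{tau} and use adjointness $\langle bx,y\rangle=\langle x,b^Ty\rangle$ to rewrite each entry: the vector entries become $L'x=bL(b^{-1}x)$, while the scalar entries become $\langle L'x,e_i\rangle=\langle L(b^{-1}x),b^Te_i\rangle$. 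The whole argument then rests on one structural observation. Since $b$ is lower triangular, so is $b^{-1}$, hence $b^{-1}$ preserves the ``tail'' subspace $\mathrm{span}(e_{n-k+2},\dots,e_n)$; dually $b^T$ is upper triangular and preserves the ``head'' subspace $\mathrm{span}(e_1,\dots,e_{k-1})$. These are precisely the vectors and covectors appearing in \eqref{tau} and in the definition of $C_k$.

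Concretely, I would first treat the $k-1$ columns indexed by $j=n-k+2,\dots,n$. As $b^{-1}e_j=\sum_{l=n-k+2}^{n}(b^{-1})_{lj}e_l$ stays inside the tail subspace, and both the vector entry $bL(b^{-1}e_j)$ and the scalar entries $\langle L(b^{-1}e_j),b^Te_i\rangle$ depend linearly on $b^{-1}e_j$, each such column is an honest linear combination of the columns built from $e_l$. These column operations replace $b^{-1}e_j$ by $e_j$ at the cost of the factor $\det\big((b^{-1})_{lj}\big)_{l,j=n-k+2}^{n}=\prod_{l=n-k+2}^{n}b_{ll}^{-1}$, the determinant of the triangular tail-block of $b^{-1}$; the first column, which carries $w$, is left untouched. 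I would then factor the fixed linear map $b$ out of the entire first row by multilinearity of the formal vector-valued determinant, since expanding along that row writes the determinant as a $b$-linear combination of the scalar minors.

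Next I would treat the $k-1$ scalar rows. Because $b^Te_i=\sum_{m=1}^{k-1}b_{im}e_m$ remains in the head subspace for $i\le k-1$, each scalar row is a linear combination of the rows built from $e_m$ in place of $b^Te_i$, so a row operation replaces $b^Te_i$ by $e_i$ at the cost of $\det(b_{im})_{i,m=1}^{k-1}=\prod_{m=1}^{k-1}b_{mm}$, the head-block of $b$ (the vector row being untouched). After these operations the numerator equals $\big(\prod_{m=1}^{k-1}b_{mm}\big)\big(\prod_{l=n-k+2}^{n}b_{ll}^{-1}\big)\,b\cdot D(L,w)$, where $D(L,w)=C_k(L)\,\tau_k(L)(w)$. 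Running the identical column and row operations on the $(k-1)\times(k-1)$ determinant defining $C_k$ yields $C_k(L')=\big(\prod_{m=1}^{k-1}b_{mm}\big)\big(\prod_{l=n-k+2}^{n}b_{ll}^{-1}\big)\,C_k(L)$. Forming $\tau_k(L')(v)=D(L',v)/C_k(L')$, the two scalar prefactors and the factor $C_k(L)$ cancel, leaving $\tau_k(L')(v)=b\,\tau_k(L)(w)=b\,\tau_k(L)b^{-1}v$, as required.

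The computation is essentially bookkeeping, so I do not expect a genuine obstacle; the one point demanding care is the correct identification of the triangular sub-blocks that produce the scaling factors and the check that these factors are identical for the numerator and for $C_k$. This coincidence is the whole reason $\tau_k$ is defined as a quotient: the numerator alone is only a semi-invariant, transforming by the character $\prod_{m=1}^{k-1}b_{mm}\prod_{l=n-k+2}^{n}b_{ll}^{-1}$ (matching the weight structure already seen for the chopping semi-invariants in Lemma \ref{lem:key2}), and dividing by $C_k$ cancels that character, producing genuine $B^-_n$-covariance.
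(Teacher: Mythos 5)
Your proof is correct and follows essentially the same route as the paper: column operations exploiting that the lower triangular $b^{-1}$ preserves $\mathrm{span}(e_{n-k+2},\dots,e_n)$, row operations exploiting that the upper triangular $b^{T}$ preserves $\mathrm{span}(e_1,\dots,e_{k-1})$, factoring $b$ out of the vector-valued first row, and cancellation of the common character between the numerator and $C_k$ --- the only organizational difference being that the paper splits $b$ into its unipotent and diagonal parts (invariance for the former, equal characters for the latter), while you treat a general $b\in B^-_n$ in one pass and thereby obtain the character explicitly. Incidentally, your character $\prod_{m=1}^{k-1}b_{mm}\,\prod_{l=n-k+2}^{n}b_{ll}^{-1}$ is the correct one (check $k=2$: $C_2(dLd^{-1})=\tfrac{d_1}{d_n}\,a_{1n}$), whereas the paper's displayed diagonal character $\tfrac{d_1\cdots d_k}{d_{n-k+1}\cdots d_n}$ has an off-by-one slip, which is immaterial there since only the equality of the numerator's and denominator's characters is used.
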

\begin{proof}
Let us consider separately the Cartan and unipotent parts of the group $B_-$. First we consider a unipotent lower-triangular element $n$. It is characterised by the property:
\bea
n^{-1}(e_n)&=&e_n;\nn\\
n^{-1}(e_{n-1})&=&e_{n-1}+\alpha^n_{n-1} e_{n};\nn\\
\ldots&&\nn\\
n^{-1}(e_j)&=&e_j+\alpha^{j+1}_j e_{j+1}+\ldots+\alpha^n_j e_{n}
\label{rel-ninv}
\eea
for all $j$. Similarly,
\bea
n^T(e_1)&=&e_1;\nn\\
n^T(e_{2})&=&e_{2}+\beta^2_{1} e_{1};\nn\\
\ldots&&\nn\\
n^T(e_j)&=&e_j+\beta^{j}_{j-1} e_{j-1}+\ldots+\beta^j_1e_{1}
\label{rel-ntr}
\eea
for all $j$. This is sufficient to demonstrate that $C_k(L)$ is invariant:
\bea
&&C_k(nLn^{-1})=\left|
\begin{array}{cccc}
\langle nLn^{-1}e_{n-k+2}, e_1\rangle & \ldots & \langle nLn^{-1} e_{n-1}, e_1\rangle & \langle nLn^{-1} e_n, e_1\rangle\\
\vdots & & & \vdots \\
\langle nLn^{-1}e_{n-k+2}, e_{k-1}\rangle & \ldots & \langle nLn^{-1} e_{n-1}, e_{k-1}\rangle & \langle nLn^{-1} e_n, e_{k-1}\rangle
\end{array}
\right|\nn\\
&=&\left|
\begin{array}{cccc}
\langle nL(e_{n-k+2}+\ldots +\alpha_{n-k+2}^n e_n), e_1\rangle & \ldots & \langle nL(e_{n-1}+\alpha^n_{n-1} e_{n}), e_1\rangle & \langle nL e_n, e_1\rangle\\
\vdots & & & \vdots \\
\langle nL(e_{n-k+2}+\ldots +\alpha_{n-k+2}^n e_n), e_{k-1}\rangle & \ldots & \langle nL(e_{n-1}+\alpha^n_{n-1} e_{n}), e_{k-1}\rangle & \langle nL e_n, e_{k-1}\rangle
\end{array}
\right|\nn\\
&=&\left|
\begin{array}{cccc}
\langle nLe_{n-k+2}, e_1\rangle & \ldots & \langle nLe_{n-1}, e_1\rangle & \langle nL e_n, e_1\rangle\\
\vdots & & & \vdots \\
\langle nLe_{n-k+2}, e_{k-1}\rangle & \ldots & \langle nLe_{n-1}, e_{k-1}\rangle & \langle nL e_n, e_{k-1}\rangle
\end{array}
\right|.\nn
\eea
The first equality is due to (\ref{rel-ninv}), the second one is the consequence of the basic determinant property: determinant of a matrix is invariant under the linear transformations with columns. Using the same arguments with the rows of this matrix and relations (\ref{rel-ntr}) we get
\bea
C_k(nLn^{-1})&=&
\left|
\begin{array}{cccc}
\langle Le_{n-k+2}, e_1\rangle & \ldots & \langle Le_{n-1}, e_1\rangle & \langle L e_n, e_1\rangle\\
\vdots & & & \vdots \\
\langle Le_{n-k+2}, e_{k-1}\rangle & \ldots & \langle Le_{n-1}, e_{k-1}\rangle & \langle L e_n, e_{k-1}\rangle
\end{array}
\right|=C_k(L).\nn
\eea
The same argument is valid for the numerator of the formula (\ref{tau}). Now we consider the action of the diagonal matrices on $\tau_k(L)$: it is easy to see that both the numerator and the denominator of this expression change in the same manner under the action of a diagonal matrix $d=diag(d_1,\ldots,d_n)$; in effect they are multiplied by the expression
\bea
C_k(dLd^{-1})=\frac {d_1 \ldots d_k}{d_{n-k+1}\ldots d_n}.\nn
\eea
\end{proof}
\begin{remark}\label{rem:charprop}\rm
In effect, the similar covariance condition holds not only for the group $B^-_n$, but for any group inside the parabolic subgroup $\mathbf{P}_k\subseteq SL_n(\R)$, where $\mathbf{P}_K$ is the group of invertible matrices inside the union of the groups $B_-\bigcup P_n^k$ (here $P_n^k$ is the Lie group with Lie algebra $\mathfrak p_n^k$); in other words, this is the group of invertible matrices that have the form
\setcounter{MaxMatrixCols}{15}
 \[{\tiny
{\footnotesize X\!=\!}\begin{pmatrix}
x_{11}      & 0              & \dots        & 0          & 0                & \dots         & 0                & 0         &0    & \dots & 0\\
x_{21}      & x_{22}      & \dots        & 0          & 0                 & \dots         & 0                & 0         &0    & \dots & 0\\
\vdots      & \vdots       & \ddots &\vdots         & \vdots         & \ddots &\vdots         &  \vdots        &\vdots  & \ddots &\vdots\\
x_{k1}      & x_{k2}      & \dots   & x_{kk}        & 0                 & \dots        & 0                &  0        &0    & \dots & 0\\
x_{k+1,1} & x_{k+1,2} &\dots   & x_{k+1,k}    & x_{k+1,k+1}& \dots & x_{k+1,n-k} & 0        &0 & \dots & 0\\
x_{k+2,1} & x_{k+2,2} &\dots    & x_{k+2,k}    & x_{k+2,k+1}& \dots & x_{k+2,n-k} & 0         &0 & \dots & 0\\
\vdots      & \vdots       & \ddots & \vdots         & \vdots         & \ddots &\vdots         & \vdots  &\vdots &\ddots&\vdots\\
x_{n-k,1}     & x_{n-k,2}      &\dots     & x_{n-k,k}     & x_{n-k,k+1}      & \dots   & x_{n-k,n-k}     & 0                         & 0      & \dots & 0\\
x_{n-k+1,1} & x_{n-k+1,2} &\dots     & x_{n-k+1,k} & x_{n-k+1,k+1}  &\dots     & x_{n-k+1,n-k} & x_{n-k+1,n-k+1} & 0      & \dots & 0\\
x_{n-k+2,1} & x_{n-k+2,2} &\dots     & x_{n-k+2,k} & x_{n-k+2,k+1}  &\dots     & x_{n-k+2,n-k} & x_{n-k+2,n-k+1} & x_{n-k+2,n-k+2}      & \dots & 0\\
\vdots         & \vdots          &\ddots    & \vdots         & \vdots              &\ddots    &\vdots              &\vdots                &\vdots &\ddots& \vdots\\
x_{n1}        & x_{n2}          &\dots      & x_{nk}         & x_{n,k+1}      & \dots &x_{n,n-k}             &x_{n,n-k+1}       & x_{n,n-k+2}&\dots   &x_{nn}
\end{pmatrix}
} \]
For instance this is true for the matrices $O$ of orthogonal transformations of $\R^n$, for which
\[
O(e_1)=e_1,\dots,O(e_k)=e_k,\ \mbox{and}\ O(e_{n-k+1})=e_{n-k+1},\dots,O(e_{n})=e_n.
\]
Similar properties hold for the adjoint action of the corresponding Lie algebras, in particular, for any element $X$ from $\mathfrak p_n^k+\mathfrak b^-_n$: we have
\begin{equation}\label{eq:ddiff}
[X,\tau_k(L)]=\frac{d}{dt}_{|_{t=0}}\tau_k(e^{tX}Le^{-tX}).
\end{equation}
We shall denote the right hand side of this formula by $\tau_k([X,\dot L])$.
\end{remark}
\noindent Let us conclude this section by observing that due to the definition of $\tau_k(L)$ we have the following obvious property
\begin{lemma}
\label{lem-null}
\bea
\tau_k(L)e_n&=&\tau_k(L) e_{k-1}=\ldots=\tau_k(L) e_{n-k+1}=0;\nn\\
\langle \tau_k(L) v, e_1\rangle&=&\langle\tau_k(L) v, e_2\rangle=\ldots=\langle\tau_k(L) v, e_{k-1}\rangle=0.\nn
\eea
\end{lemma}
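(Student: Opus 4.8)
The plan is to read both families of identities straight off the definition \eqref{tau} of $\tau_k(L)$, invoking in each case the elementary fact that a determinant with two equal rows or columns vanishes. The quickest route is through the closed form \eqref{eq:matrixcoef1}, $\tau_k(L)_{ji}=(-1)^{k-1}L^{1,2,\dots,k-1,j}_{i,n-k+2,\dots,n}$, together with the observation recorded just after \eqref{eq:matrixcoef1} that the minor on the right vanishes as soon as an index is repeated. There is no substantive obstacle here; the work is entirely in tracking which index collides.

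For the first displayed identity I would argue that the columns $\tau_k(L)e_m$, with $m$ ranging over the last block $n-k+2,\dots,n$, are zero. Indeed, for such $m$ the column index of the minor in \eqref{eq:matrixcoef1} already appears among the fixed lower indices $n-k+2,\dots,n$, so $L^{1,\dots,k-1,j}_{m,n-k+2,\dots,n}=0$ for every row index $j$, whence the whole column $\tau_k(L)e_m$ vanishes. The same conclusion is visible directly from \eqref{tau}: substituting $v=e_m$ turns the first column $(Le_m,\langle Le_m,e_1\rangle,\dots,\langle Le_m,e_{k-1}\rangle)^T$ into an exact duplicate of the column already present for $Le_m$, so the determinant is zero.

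For the second displayed identity I would show that $\langle\tau_k(L)v,e_j\rangle=0$ for every $v$ and every $j\in\{1,\dots,k-1\}$, i.e. that the first $k-1$ rows of $\tau_k(L)$ vanish. From \eqref{eq:matrixcoef1} this is immediate: such a $j$ repeats among the fixed upper indices $1,\dots,k-1$, so $L^{1,\dots,k-1,j}_{i,n-k+2,\dots,n}=0$ for all $i$. The one point worth spelling out, to make the ``obvious'' nature of the lemma transparent, is the coordinate-free version: pairing the vector-valued determinant \eqref{tau} with $e_j$ commutes with its formal expansion along the first (vector-valued) row, simply because the determinant is $\R$-linear in that row. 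Applying $\langle\,\cdot\,,e_j\rangle$ therefore replaces each vector entry $Lv,Le_{n-k+2},\dots,Le_n$ of the first row by its $e_j$-component, producing an honest scalar $k\times k$ determinant whose first row $(\langle Lv,e_j\rangle,\langle Le_{n-k+2},e_j\rangle,\dots,\langle Le_n,e_j\rangle)$ coincides with its $(j{+}1)$-st row; a determinant with two equal rows is zero. This commutation step is the only place where anything beyond bare index-chasing is needed, and it amounts to nothing more than multilinearity of the determinant.
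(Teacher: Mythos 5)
Your proof is correct, and it is precisely the argument the paper intends: the paper gives no proof at all, introducing the lemma as an ``obvious property'' of the definition, and the duplicate-column/duplicate-row reading of \eqref{tau} (equivalently, the repeated-index vanishing in the closed form \eqref{eq:matrixcoef1}) is exactly that implicit justification. Your only step beyond index-chasing --- that pairing with $e_j$ commutes with the formal expansion of the vector-valued determinant by multilinearity in the first row --- is worth the sentence you spend on it.

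One substantive point in your favour: your index range is the right one, and the lemma's first display as printed is not. As stated, $\tau_k(L)e_n=\tau_k(L)e_{k-1}=\ldots=\tau_k(L)e_{n-k+1}=0$ is garbled (for $k=2$ it would assert $\tau_2(L)=0$), and indeed $\tau_k(L)e_{n-k+1}=(-1)^{k-1}L^{1,\dots,k-1,j}_{n-k+1,n-k+2,\dots,n}\,e_j$ is generically nonzero, since no index repeats in that minor. The correct claim, which is what you prove, is $\tau_k(L)e_m=0$ for $m=n-k+2,\dots,n$ (the columns appearing in \eqref{tau}), together with the vanishing of the rows $1,\dots,k-1$; this is consistent with the paper's own example $\tau_2(L)_{ji}=-L^{1j}_{in}$, which vanishes exactly for $j=1$ or $i=n$. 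So treat the discrepancy as a typo in the statement, not a gap in your argument.
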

\subsection{QR-decomposition}\label{sect:qrdec1}
Let $\tau_k(L)$ be the matrix-valued function of $L$, as we explained above; we define the field $T_k(L)$ on $Symm_n$ by the formula
\[
T_k:Symm_n(\R)\to Symm_n(\R),\ T_k(L)=[M(\tau_k(L)),L].
\]
Below (see section \ref{sect:hamilttk}) we will show that $T_k(L)$ are in fact the Hamiltonian fields of certain DLNT integrals, namely $T_k(L)$ is the Hamilton field of $H_{k,n-1}(L)$. This will automatically mean that these fields commute with each other for all possible pairs of indices. For us one of the main advantages that come from the working with these fields is the fact that their flows have solutions, given by the QR-decomposition, just like the original Toda flow; namely, the following is true:
\begin{Th}\label{th:1}
Let us consider the flow of the vector field $T_k(L)$ on $Symm_n$:
\bea
\label{Lax}
\frac {dL}{d t_{k}}=[M(\tau_k(L)),L].
\eea
Then the solution $L=L(t_k)$ of this equation can be obtained by the following procedure: let $L_0$ is the initial point in the phase space of symmetric matrices. Let us consider the $QR$ decomposition of the exponential:
\bea
\label{QR}
\exp( t \tau_k(L_0))=R(t) Q(t)
\eea
where $t=t_{k}.$ Then
\bea
L(t)=Q(t) L_0 Q^{-1}(t)\nn
\eea
is a solution for (\ref{Lax}) with the initial value $L(0)=L_0$.
\end{Th}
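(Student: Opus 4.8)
The plan is to follow the classical QR-solution of the Toda flow, the only new feature being that $\tau_k$ is merely covariant, not $SL_n$-equivariant. For $t$ near $0$ the matrix $\exp(t\tau_k(L_0))$ is close to $\mathbbm 1$, so the factorization \eqref{QR} exists, is unique and smooth in $t$, with $R(t)$ lower triangular, $Q(t)\in SO_n$ and $R(0)=Q(0)=\mathbbm 1$ (Gram--Schmidt). Abbreviating $\tau:=\tau_k(L_0)$ and differentiating $\exp(t\tau)=R(t)Q(t)$ by means of $\frac{d}{dt}\exp(t\tau)=\tau\exp(t\tau)$, then multiplying by $R^{-1}$ on the left and $Q^{-1}$ on the right, I obtain
\[
R^{-1}\dot R+\dot Q\,Q^{-1}=R^{-1}\tau R .
\]
Here $R^{-1}\dot R$ is lower triangular and $\dot Q\,Q^{-1}\in\mathfrak{so}_n$; since $M$ is exactly the projection onto $\mathfrak{so}_n$ along the lower-triangular matrices, reading off the $\mathfrak{so}_n$-part yields $\dot Q\,Q^{-1}=M\bigl(R^{-1}\tau R\bigr)$. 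Consequently, for $L(t)=Q(t)L_0Q(t)^{-1}$ the elementary computation $\dot L=[\dot Q\,Q^{-1},L]$ gives $\dot L=[M(R^{-1}\tau R),L]$, and the whole statement reduces to identifying $M(R^{-1}\tau R)$ with $M(\tau_k(L(t)))$.

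The first step is purely algebraic and exact: because $\tau$ commutes with $\exp(t\tau)=R(t)Q(t)$, conjugation of $\tau$ by $R(t)Q(t)$ is trivial, whence $Q\tau Q^{-1}=R^{-1}\tau R$. Therefore it is enough to prove the co-moving identity
\[
\tau_k\bigl(L(t)\bigr)=Q(t)\,\tau_k(L_0)\,Q(t)^{-1},
\]
since it immediately gives $M(\tau_k(L))=M(Q\tau Q^{-1})=M(R^{-1}\tau R)=\dot Q\,Q^{-1}$ and hence the desired Lax equation $\dot L=[M(\tau_k(L)),L]$; the initial condition $L(0)=L_0$ holds as $Q(0)=\mathbbm 1$.

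To prove the co-moving identity I would use the $B^-_n$-covariance of $\tau_k$ established above. Writing $g=\exp(t\tau)$ and $Q=R^{-1}g$, covariance with $b=R^{-1}\in B^-_n$ turns its right-hand side into $Q\tau_k(L_0)Q^{-1}=R^{-1}\tau R=\tau_k(R^{-1}L_0R)$, and its left-hand side into $\tau_k(R^{-1}(gL_0g^{-1})R)=R^{-1}\tau_k(gL_0g^{-1})R$. Cancelling the conjugation by $R$, the identity becomes the single clean equation $\tau_k(gL_0g^{-1})=\tau_k(L_0)$ with $g=\exp(t\tau_k(L_0))$. I would settle this by a flow argument: the function $\psi(t)=\tau_k(e^{t\tau}L_0e^{-t\tau})$ satisfies $\dot\psi=[\tau,\psi]$, so $\psi(t)=e^{t\,\mathrm{ad}_\tau}\psi(0)=e^{t\,\mathrm{ad}_\tau}\tau=\tau$ because $[\tau,\tau]=0$; hence $\psi(t)\equiv\tau_k(L_0)$.

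The crux of the argument -- and the step I expect to be the main obstacle -- is the infinitesimal covariance $\frac{d}{ds}\big|_{0}\tau_k(e^{sX}Le^{-sX})=[X,\tau_k(L)]$ used with $X=\tau=\tau_k(L_0)$. The version recorded in Remark~\ref{rem:charprop}, formula \eqref{eq:ddiff}, is proved only for $X\in\mathfrak p_n^k+\bn$, while by \eqref{eq:matrixcoef1} the matrix $\tau_k(L_0)$ lies in $\mathfrak p_n^{k-1}$ and its $k$-th row generically carries entries strictly above the diagonal, so $\tau_k(L_0)\notin\mathfrak p_n^k+\bn$. Bridging this gap is precisely where the additional property of the $M$-operator expressed by \eqref{eq:nablaeki}, obtained in Sections~\ref{sect:hamilttk} and \ref{sect:chopqr1}, is needed: it asserts exactly the covariance of $\tau_k$ along the field that $\tau_k(L_0)$ generates, which is all the flow argument above requires. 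Once the co-moving identity is secured, uniqueness for the ODE $\dot L=[M(\tau_k(L)),L]$ with initial value $L_0$ completes the proof.
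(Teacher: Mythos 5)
Your opening moves coincide with the paper's: differentiating the factorization, reading off $\dot Q\,Q^{-1}$ as the $\mathfrak{so}_n$-projection, and reducing everything to the co-moving identity $\tau_k(L(t))=Q(t)\tau_k(L_0)Q(t)^{-1}$. Your further reduction, via $B^-_n$-covariance with $b=R^{-1}$ and the commutation $Q\tau Q^{-1}=R^{-1}\tau R$, to the single identity $\tau_k(e^{t\tau}L_0e^{-t\tau})=\tau_k(L_0)$ with $\tau=\tau_k(L_0)$, is also legitimate. But the proof of that identity is exactly where your argument breaks, at the spot you yourself flag. The flow argument $\dot\psi=[\tau,\psi]$ needs the infinitesimal covariance $\left.\frac{d}{ds}\right|_{s=0}\tau_k(e^{s\tau}Le^{-s\tau})=[\tau,\tau_k(L)]$ along $X=\tau$, and, as you correctly observe, $\tau_k(L_0)$ generically lies outside $\mathfrak p_n^k+\bn$ (its $k$-th row carries nonzero entries above the diagonal), so \eqref{eq:ddiff} does not apply. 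Your claim that \eqref{eq:nablaeki} ``asserts exactly'' the needed covariance is a non sequitur: \eqref{eq:nablaeki} is merely the formula expressing $\nabla E_{k,i}$ as $D_k$ applied to $\nabla\det(L-\lambda\mathbbm 1)$; it is used in section \ref{sect:chopqr1} for the higher chopping flows and contains no covariance statement about $\tau_k$, let alone along the direction generated by $\tau_k(L_0)$. So the crux of your proof is left unproved, and nothing in the paper supplies it in the form you invoke.

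The missing idea --- and the paper's actual argument --- is to avoid conjugating by $g=\exp(t\tau)$ altogether and instead show that the orthogonal factor $Q(t)$ itself belongs to the class covered by Remark \ref{rem:charprop}. By \eqref{eq:matrixcoef1} (cf.\ Lemma \ref{lem-null}), $\tau=\tau_k(L_0)$ has vanishing rows $1,\dots,k-1$ and annihilates $e_{n-k+2},\dots,e_n$; hence $\eta(t)=\exp(t\tau)$ fixes these basis vectors and covectors, and since $\eta=RQ$ with $R$ lower triangular and $Q$ orthogonal, a short Gram--Schmidt induction shows that $Q(t)$ fixes the same $e_i$'s. Such $Q(t)$ is an orthogonal matrix of precisely the type in Remark \ref{rem:charprop}, so the \emph{group} covariance there yields $Q\tau_k(L_0)Q^{-1}=\tau_k(QL_0Q^{-1})$ directly, with no infinitesimal argument; your computation then closes the proof. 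Note that your reduced identity $\tau_k(gL_0g^{-1})=\tau_k(L_0)$ is indeed true, but only as a corollary of this: conjugating by the two factors separately gives $\tau_k(gL_0g^{-1})=R\,\tau_k(QL_0Q^{-1})\,R^{-1}=g\tau g^{-1}=\tau$. It cannot serve as the starting point, because $g\notin\mathbf P_k$ in general, so no covariance applies to $g$ itself. (A minor point: the $QR$-factorization exists and is smooth for \emph{all} $t$, not just near $t=0$, since $\exp(t\tau)$ is always invertible and the factorization with positive-diagonal $R$ is global.)
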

\begin{proof} The proof is done by a suitable modification of the arguments used for the isospectral flows. Let us differentiate (\ref{QR}) by $t$:
\bea
R Q\tau_k(L_0)=\dot{R}Q+R\dot{Q}. \nn
\eea
This is equivalent to the following expression
\bea
Q\tau_k(L_0) Q^{-1} = R^{-1}\dot{R} +\dot{Q}Q^{-1}.\nn
\eea
Here $R^{-1}\dot{R}$ is a lower triangular, and $\dot{Q}Q^{-1}$ is orthogonal matrix; so
\[
M(Q\tau_k(L_0) Q^{-1})=\dot{Q}Q^{-1}.
\]
On the other hand
\bea
\frac {dL}{dt}=\dot{Q} L_0 Q^{-1}-QL_0Q^{-1}\dot{Q} Q^{-1}=[\dot{Q} Q^{-1},Q L_0 Q^{-1} ].\nn
\eea
It remains to demonstrate that $\dot{Q}Q^{-1}$ is exactly $M(\tau_k(L(t)))$. First we remark that the orthogonal matrix from the $QR$-decomposition is of the very particular form. Indeed, the element $\eta(t)=\exp( t \tau_k(L_0))$ is such that
\bea
\eta(t) e_i= e_i \qquad i=1,\ldots, k-1, n-k+1,\ldots, n.\nn
\eea
Clearly the same property holds for the matrix $Q(t)$. Hence, by the characteristic property of the map $\tau_k$, see remark \ref{rem:charprop}
\bea
Q\tau_k(L_0) Q^{-1}=\tau_k(Q L_0 Q^{-1}).\nn
\eea
Summing up all these observations we obtain
\bea
 \dot{Q}Q^{-1}=M(Q\tau_k(L_0) Q^{-1})=M(\tau_k(Q L_0 Q^{-1}))=M(\tau_k(L(t))).
\eea
\end{proof}

\subsection{Hamiltonian description}\label{sect:hamilttk}
Let us now show that $T_k(L)$ is in fact equal to the Hamilton field of a Deift chopping Hamiltonian. To this end recall that if $E_{k,n-i}(L)$ are the coefficients of the partial characteristic polynomial $\chi_k(\lambda)=det((L-\lambda \mathbbm 1)^{(k)})$ as defined in (\ref{Pkn}), (also c.f. \eqref{eq:altchop1}) i.e. $E_{k,n-i}(L)$ is the coefficient at $\lambda^{n-2k-i}$, then the chopping Hamiltonians are defined as the ratios
\bea
I_{k,n-i}(L)=\frac{E_{k,n-i}(L)}{E_{k,0}(L)}.\nn
\eea
As we showed earlier, they constitute a Poisson-commutative family. Then the following is true
\begin{Th}\label{theo:tauk}
The vector field $T_k(L)=[M(\tau_{k}(L)),L]$ is Hamiltonian with respect to the Poisson structure on $Symm_n$ induced from the Kirillov-Kostant bracket on $C^\infty(\mathfrak b_-)$ and the Hamiltonian $I_{k,n-1}$.
\end{Th}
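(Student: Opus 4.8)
The plan is to reduce the statement to an identity between two elements of $\mathfrak{so}_n$, and then to a determinantal identity between minors of $L$. By the general expression \eqref{eq:hamfieldsym1} for Hamiltonian vector fields on $Symm_n$, the Hamiltonian field of $I_{k,n-1}$ is $[M(\nabla I_{k,n-1}),L]$, where $M$ is the projection onto $\mathfrak{so}_n$ along $\mathfrak b_n$ (remark \ref{rem:moper1}). Hence it is enough to prove
\[
M(\nabla I_{k,n-1})=M(\tau_k(L)).
\]
For $L$ with simple spectrum this is also necessary, since an antisymmetric matrix commuting with such an $L$ must vanish; thus $[M(\nabla I_{k,n-1})-M(\tau_k(L)),L]=0$ would already force the two projections to coincide, and the general case follows by continuity. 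This confirms that matching the $\mathfrak{so}_n$-projections is the correct, lossless target. Because $M$ is determined by the strictly upper-triangular part, the theorem comes down to the entrywise equalities $(\nabla I_{k,n-1})_{ij}=\tau_k(L)_{ij}$ for $i<j$.

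Next I would unfold the left-hand side explicitly. Writing $I_{k,n-1}=E_{k,n-1}/E_{k,0}$ and applying the quotient rule,
\[
\nabla I_{k,n-1}=\frac{1}{E_{k,0}}\,\nabla E_{k,n-1}-\frac{E_{k,n-1}}{E_{k,0}^{2}}\,\nabla E_{k,0},
\]
I would read off from \eqref{Pkn} that $E_{k,0}$ and $E_{k,n-1}$ are concrete (sums of) minors of $L$: up to sign $E_{k,0}$ is the corner minor $L^{\,n-k+1,\dots,n}_{\,1,\dots,k}$, while the subleading coefficient $E_{k,n-1}$ is a prescribed sum of minors of size $k+1$. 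Consequently each partial derivative $\partial/\partial x_{ij}$ of these polynomials is the corresponding cofactor, and the off-diagonal entries of $\nabla I_{k,n-1}$ become explicit rational combinations of minors of $L$. The right-hand side is already a single ratio of minors, namely $\tau_k(L)_{ij}=(-1)^{k-1}L^{\,1,\dots,k-1,i}_{\,j,n-k+2,\dots,n}/C_k(L)$ by \eqref{eq:matrixcoef1}.

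The hard part will be the resulting determinantal identity: after clearing denominators one must show that the combination of cofactors of $E_{k,0}$ and $E_{k,n-1}$ collapses to the single minor in \eqref{eq:matrixcoef1}. I expect this to follow from a Sylvester--Jacobi (Desnanot--Jacobi) identity relating complementary minors, supplemented if needed by the Plücker relations, exactly as in the direct $n=6$ verification behind section \ref{sect:sl6}. To keep the bookkeeping manageable I would organize everything representation-theoretically, as in that example: the numerator $E_{k,\cdot}$ is a matrix element of $\wedge^{k}\rho(L)$ of the type $F^{\rho}$ from lemma \ref{lem_binv}, and the contraction $j$ used there to form $M_h$ is precisely the operation converting $\wedge^{k}\rho(L)$ into the $\tau_k$-minors, so differentiating the matrix element in $L$ and then applying $j$ should reproduce $\tau_k(L)$ directly. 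As a guiding sanity check I would first confirm the whole scheme for $k=1$, where $\tau_1(L)=L$ and $T_1(L)=[M(L),L]$ is the Toda field \eqref{eq:laxtoda} itself, so the argument specializes to the classical identification of the Toda flow with the Hamiltonian field of $\tfrac12 Tr(L^2)$; increasing $k$ then merely replaces $L$ by $\tau_k(L)$ and the Toda Hamiltonian by the corresponding chopping integral, and it is exactly at this replacement that the determinantal identity does the real work.
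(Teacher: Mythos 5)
Your reduction coincides with the paper's own first step: by \eqref{eq:hamfieldsym1} it suffices to prove $M(\nabla I_{k,n-1})=M(\tau_k(L))$, and this is a comparison of strictly upper-triangular entries (your simple-spectrum/continuity remark is correct but not needed). The genuine gap is that you stop exactly where the theorem lives: the collapse of the cofactor combination to the single minor of \eqref{eq:matrixcoef1} is only conjectured (``I expect this to follow from a Sylvester--Jacobi identity, supplemented by the Pl\"ucker relations'') and never carried out, and that collapse \emph{is} the content of the theorem. Worse, the identity as you set it up cannot be a routine term-by-term match: $\tau_k(L)$ is normalized by $C_k(L)$, the corner minor of size $k-1$, while your quotient rule produces the denominator $E_{k,0}$, a corner minor of size $k$; after ``clearing denominators'' you would face a quadratic relation among minors of different sizes, which Desnanot--Jacobi does not hand you directly. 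Your $k=1$ sanity check does not detect this, because it silently replaces $I_{1,n-1}$ by $\tfrac12 Tr(L^2)$, which under the conventions of \eqref{Pkn} and of the theorem's proof is a different function; note that the paper's own $n=6$ example pairs the $\tau_2$-type $M$-operator with $I_{1,2}$, so the alignment of the index of $\tau_k$ with the chopping level is itself a delicate bookkeeping point, whereas you treat raising $k$ as ``merely replacing $L$ by $\tau_k(L)$''.

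The paper's proof avoids any Jacobi or Pl\"ucker identity by exploiting the $\lambda$-grading instead: it differentiates the whole polynomial $\chi_k(\lambda)=\det\bigl((L-\lambda\mathbbm 1)^{(k)}\bigr)$, so the coefficient of $da_{ij}$ with $i<j$ is itself a cofactor \emph{polynomial} in $\lambda$; a one-line degree count, together with the symmetry of $L$ (to exchange row-deleted and column-deleted minors), shows that the single relevant power of $\lambda$ has coefficient exactly the minor $L^{1,\dots,k-1,j}_{i,n-k+2,\dots,n}$ of \eqref{eq:matrixcoef1} --- no identity between products of minors ever arises, because extracting a fixed $\lambda$-coefficient of a cofactor already produces one minor. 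The denominator term likewise never enters: $E_{k,0}$ depends only on the corner variables on one side of the diagonal, so $\nabla E_{k,0}$ is strictly triangular and is annihilated by the projection $M$, which kills the second quotient-rule term before any cancellation is required --- a structural simplification your plan misses by carrying the cofactors of $E_{k,0}$ into the putative identity. So what you have is a correct setup plus an unproven core step; to complete it along the paper's lines you should differentiate $\chi_k(\lambda)$ as a polynomial in $\lambda$ and extract the appropriate coefficient, rather than differentiate the individual coefficients $E_{k,n-1}$, $E_{k,0}$ and hope for a determinantal miracle.
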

\begin{proof}
We begin with recalling (see section \ref{sect:chop1}) that the Hamilton field of a smooth function $f:Symm_n\to\R$ with respect to the Poisson structure, induced from $\mathfrak b_n$ is equal to:
\[
X_f=[M(\nabla f),L]
\]
where $M$ is the projector to $\mathfrak{so}_n$ that we have used earlier and for any smooth function on $\mathfrak{sl}_n$, $\nabla f$ denotes the matrix of partial derivatives of $f$: if $\partial_{ij}=\frac{\partial}{\partial x_{ij}}$, then (see formula \eqref{eq:nabladef})
\[
\nabla f=\begin{pmatrix}
\partial_{11}f &\dots &\partial_{1n}f\\
\vdots &\ddots &\vdots\\
\partial_{n1}f &\dots &\partial_{nn}f
\end{pmatrix}.
\]
Observe that on $Symm_n$ we have $x_{ij}=x_{ji}$ and this matrix is in effect symmetric: $\partial_{ij}(f)=\partial_{ji}f$. Now in order to show that $T_k$ coincides with the Hamilton field of $I_{k,n-1}$, we need to show that $M(\tau_k(L))$ coincides with $M(\nabla I_{k,n-1}(L))$ for all symmetric $L$. To this end let us consider the upper-triangular part of the expression
\bea
d\left(\frac{\chi_k(\lambda)}{E_{k,0}(L)}\right).\nn
\eea
That is the coefficients at $d(a_{ij})$ with $i<j$ of the differential. Such a coefficient for the numerator is a minor of $(L-\lambda \mathbbm 1)^{(k)}$ with excluded $i$-th row and $j$-th column. This is a polynomial of $\lambda$ of degree $n-2k-2$. Its highest term is equal to the minor of $(L-\lambda \mathbbm 1)^{(k)}$ with conversely $i$-th column and $j$-th row removed, since the matrix $L$ is symmetric; taking the coefficient at $\lambda^{n-2k-1}$ of this expression we get just $L_{i,n-k+2,\ldots,n}^{1,\ldots,k-1,j}$ which is exactly the upper-tringular matrix element for the matrix $\tau_k(L)$, see the formula \eqref{eq:matrixcoef1}. We finally recall again that the differential of $E_{k,n-2k}(L)$ does not enter the upper-triangular part of the differential. Hence we get
\bea
M(\nabla I_{k,n-1}(L))=M(\tau_k(L)).
\eea
\end{proof}

\subsection{Higher chopping integrals and QR-decomposition method}\label{sect:chopqr1}
%
%
%
%
%

In the previous sections we described the integration procedure of the Hamiltonian flow, induced from the chopping integral $I_{k,n-1}(L)$; it was based on an explicit formula for the corresponding Hamiltonian vector field $T_k(L)$. Now in this section we will show that the methods used to integrate the fields $T_k(L)$ can be applied to all the chopping integrals. Namely, using the explicit formula for the functions $E_{k,i}(L)$ based on the application of the differential operator $D_k$, we can now describe the Hamiltonian fields, associated with the chopping integrals and show that a variant of QR-decomposition method of solving the corresponding flow equations works in this case too.

To this end we first of all observe that the matrix $\nabla\det L$ (for not necessarily symmetric $L$, i.e. we do not assume $x_{ij}=x_{ji}$) is given by the following formula, similar to the definition of $\tau_k(L)$:
\begin{equation}\label{eq:nablall}
\nabla\det L(v)=\begin{vmatrix}
0 & e_1 &\dots & e_n\\
v_1& x_{11}&\dots & x_{1n}\\
\vdots &\vdots &\ddots&\vdots\\
v_n & x_{n1}&\dots &x_{nn}
\end{vmatrix}
\end{equation}
where $v=v_1e_1+\dots+v_ne_n$ is the representation of $v\in\R^n$ in basis $e_1,\dots,e_n$. To see this, it is enough to recall the formula $\partial_{ji}(\det L)=\widehat L_i^j$ and so
\begin{equation}\label{eq:nablalll}
 \nabla(\det L)(e_i)=\widehat L_i^je_j.
\end{equation}
Here $\widehat L_i^j$ denotes the algebraic complement of the matrix $L$ at $ij$-th place, i.e., using the notation from the section \ref{sect:sl6}, we have
\[
\widehat L_i^j=(-1)^{i+j}L^{1,\dots,\widehat j,\dots,n}_{1,\dots,\widehat i,\dots,n},
\]
where as usual the hats above indices signify their absence from the formula. Now the comparison of \eqref{eq:nablall} and \eqref{eq:nablalll} proves the result.

It follows that the matrices $\nabla E_{k,i}(L)$ can be obtained up to a sign by pointwise application of the operator $D_k$ to the matrix coefficients of the matrix standing at $\lambda^{n-i}$ in the expression $\nabla\det(L-\lambda\mathbbm1)$,
\begin{equation}\label{eq:nablaeki}
\sum_i (-1)^{n-i}\lambda^{n-i}(\nabla E_{k,i}(L))(v)=D_k\begin{vmatrix}
0 & e_1 &\dots & e_n\\
v_1& x_{11}-\lambda&\dots & x_{1n}\\
\vdots &\vdots &\ddots&\vdots\\
v_n & x_{n1}&\dots &x_{nn}-\lambda
\end{vmatrix}
\end{equation}
This follows from the fact that both $\nabla$ and $D_k$ are differential operators with constant coefficients and hence they commute with each other. Also remark that since $E_{k,0}(L)$ does not depend on the coefficients $x_{ij}$ with $i>j$, matrix $\nabla E_{k,0}(L)$ is strictly lower-triangular matrix, and hence it is killed by the projection $M$. So we have
\[
M\left(\nabla\left(\frac{E_{k,i}(L)}{E_{k,0}(L)}\right)\right)=\frac{1}{E_{k,0}(L)}M(\nabla E_{k,i}(L)).
\]
We can finally show that the QR-decomposition method can be applied to the flows of the functions
$$I_{k,i}(L)=\frac{E_{k,i}(L)}{E_{k,0}(L)}.$$
To this end we observe that the matrix
$$\mathbf E_{k,i}(L)=\nabla\left(\frac{E_{k,i}(L)}{E_{k,0}(L)}\right)$$
 verifies the same characteristic condition as $\tau_k(L)$:
\begin{Prop}
\begin{equation}\label{eq:charproperty}
g\mathbf E_{k,i}(L)g^{-1}=\mathbf E_{k,i}(gLg^{-1})
\end{equation}
for all orthogonal matrices $g$ in the union $B_-\bigcup P_n^k$.
\end{Prop}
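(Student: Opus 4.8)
The plan is to deduce \eqref{eq:charproperty} from two ingredients: a general principle turning invariance of a scalar function into equivariance of its gradient under \emph{orthogonal} conjugation, and the invariance of the scalar $I_{k,i}(L)=E_{k,i}(L)/E_{k,0}(L)$ under the group in question. Since $\mathbf E_{k,i}=\nabla I_{k,i}$ by definition, it suffices to treat these two points. For the first, I would write the differential of any function $f$ of the matrix entries as $df_L(\dot L)=\sum_{ij}(\nabla f)_{ij}\dot L_{ij}=\mathrm{Tr}\big((\nabla f)^T\dot L\big)$ and differentiate the relation $f(gLg^{-1})$ in an arbitrary direction $\dot L$, using that the derivative of $gLg^{-1}$ in this direction is $g\dot L g^{-1}$. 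If $f(gLg^{-1})=f(L)$, comparing the two sides gives $\nabla f(gLg^{-1})=(g^T)^{-1}\,\nabla f(L)\,g^T$ for arbitrary invertible $g$; it is exactly here that orthogonality is used, since $g^T=g^{-1}$ collapses this to $\nabla f(gLg^{-1})=g\,\nabla f(L)\,g^{-1}$.

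It then remains to show that $I_{k,i}$ is invariant under conjugation by the orthogonal elements of $B_-\cup P_n^k$. Here the key observation is that an orthogonal matrix lying inside the parabolic $\mathbf P_k$ (the block shape in remark \ref{rem:charprop}) is necessarily block diagonal: it acts by $\pm1$ on each of $e_1,\dots,e_k$ and $e_{n-k+1},\dots,e_n$, and by some orthogonal transformation $O_W$ on the middle block $W=\mathrm{span}(e_{k+1},\dots,e_{n-k})$. Writing the chopped matrix as $(L-\lambda\mathbbm{1})^{(k)}=\pi\,(L-\lambda\mathbbm{1})\,\iota$, where $\pi$ is the projection onto $\mathrm{span}(e_{k+1},\dots,e_n)$ and $\iota$ the inclusion of $\mathrm{span}(e_1,\dots,e_{n-k})$, such a $g$ preserves all three subspaces involved, so that $(gLg^{-1}-\lambda\mathbbm{1})^{(k)}=\bar g\,(L-\lambda\mathbbm{1})^{(k)}\,\tilde g^{-1}$, with $\bar g$ and $\tilde g$ the restrictions of $g$ to $\mathrm{span}(e_{k+1},\dots,e_n)$ and $\mathrm{span}(e_1,\dots,e_{n-k})$. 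Taking determinants yields $\chi_k(gLg^{-1},\lambda)=\tfrac{\det\bar g}{\det\tilde g}\,\chi_k(L,\lambda)$, and since both $\det\bar g$ and $\det\tilde g$ contain the same factor $\det O_W$, the prefactor is a constant in $\lambda$ (in fact $\pm1$). Hence every coefficient $E_{k,i}$ is scaled by one and the same constant — this is the character of Lemma \ref{lem:key2} seen from the group side — and the ratio $I_{k,i}=E_{k,i}/E_{k,0}$ is genuinely invariant. Combined with the gradient principle, this proves \eqref{eq:charproperty}.

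I expect the main obstacle to be the bookkeeping of the second step: one must check carefully that an orthogonal element of the parabolic really is block diagonal, and that the determinantal prefactors coming from the row side and the column side carry the same $\det O_W$ so that they cancel in the ratio. The \emph{conceptual} crux, by contrast, is cheap once isolated — invariance of the scalar passes to conjugation-equivariance of its gradient only because $g$ is orthogonal, a general $g$ leaving behind a transpose twist — and this is precisely why the statement is restricted to orthogonal $g$ inside $\mathbf P_k$. It also explains why the present argument subsumes the covariance of $\tau_k(L)$ established earlier, $\tau_k$ being essentially the representative $i=n-1$ of the family $\mathbf E_{k,i}$.
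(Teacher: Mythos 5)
Your proof is correct, but its second half takes a genuinely different route from the paper's. Both arguments share the first ingredient: invariance of a scalar passes to conjugation-equivariance of its gradient exactly when $g$ is orthogonal (the paper phrases this as $\nabla^g=\rho(g)\nabla\rho(g)^{-1}$ with $\rho(g)=(g^T)^{-1}$, which collapses to plain conjugation once $g^T=g^{-1}$). Where you diverge is in how the chopping is handled: the paper never establishes invariance of the scalar $I_{k,i}$ itself. Instead it applies the gradient principle to the full invariants $E_i$, then transports the result to $E_{k,i}=D_k(E_i)$ by commuting the constant-coefficient operators $\nabla$ and $D_k$ and invoking the semi-invariance of $D_k$ under the parabolic (the group-level analogue of Lemma \ref{lem:key2}), so that $g\nabla E_{k,i}(L)g^{-1}=c_k(g)(\nabla E_{k,i})(gLg^{-1})$, with the character $c_k$ cancelling against the denominator $E_{k,0}$; the authors even remark that the $D_k$-description of $E_{k,i}$, rather than any determinantal formula, is the crucial point of their reasoning. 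You bypass $D_k$ entirely and work at the group level with the chopped characteristic polynomial: you classify the orthogonal elements of the parabolic as block diagonal (correctly — an orthogonal block-lower-triangular matrix is block diagonal, and an orthogonal triangular block is diagonal with entries $\pm1$), deduce $\chi_k(gLg^{-1},\lambda)=\frac{\det\bar g}{\det\tilde g}\,\chi_k(L,\lambda)$ with a $\lambda$-independent factor, and so obtain honest invariance of $I_{k,i}$ \emph{before} taking any gradient, which reduces the character bookkeeping to a trivial $\pm1$. Your version is more elementary and self-contained; the paper's version buys reusability — the $D_k$-mechanism needs no explicit classification of the group elements and yields the semi-invariance for the whole of $\mathfrak b_n$, not merely its (very small) orthogonal part. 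One caveat: your closing claim that this argument ``subsumes'' the earlier covariance of $\tau_k$ is too strong, since that lemma asserts $\tau_k(bLb^{-1})=b\tau_k(L)b^{-1}$ for \emph{all} of $B^-_n$, not only its orthogonal elements, and $\tau_k(L)$ agrees with $\mathbf E_{k,n-1}(L)$ only after applying the projection $M$ (Theorem \ref{theo:tauk}); but this side remark does not affect the validity of your proof of the Proposition.
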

\begin{proof} Since $E_i(L)=E_i(gLg^{-1})$ for all invertible matrices, we have (compare with the proof of the lemma \ref{lem:key2})
\[
(\nabla E_i)(gLg^{-1})=\nabla^g(E_i(gLg^{-1}))=\nabla^g E_i(L).
\]
Here $g$ acts on the matrix-valued differential operator $\nabla$ by the conjugate action, i.e.
\[
\nabla^g=\rho(g)\nabla \rho(g)^{-1},\ \mbox{where}\ \rho(g)=(g^T)^{-1}.
\]
So we have
\[
(\nabla E_i)(gLg^{-1})=\rho(g)\left(\nabla E_i(L)\right)\rho(g)^{-1}
\]
for all invertible matrices $g$, and since $\rho(g)=g$ for all orthogonal matrices, we have
\[
(\nabla E_i)(gLg^{-1})=g\left(\nabla E_i(L)\right)g^{-1}
\]
for all orthogonal matrices $g$. On the other hand, since $\nabla$ commutes with $D_k$ we have, just like in the equation \eqref{eq:equivarprop}:
\[
g\nabla E_{k,i}(L)g^{-1}=c_k(g)(\nabla E_{k,i})(gLg^{-1})
\]
for any orthogonal $g\in B_-\bigcup P_n^k$. Since the same character $c_k(b)$ pops out in the denominator $E_{k,0}(L)$ of the expression for $\mathbf E_{k,i}(L)$, the equality \eqref{eq:charproperty} holds.
\end{proof}
\begin{remark}\rm
The formula \eqref{eq:nablall} and its corollary although quite handy is not in fact indispensable for proving the characteristic equation \eqref{eq:charproperty}: it is rather the fact that we can define $E_{k,i}$ with the help of the operator $D_k$, that plays the crucial role in the reasoning.
\end{remark}
Now we can show that the QR-decomposition method gives solutions of the higher Deift flows: for the initial condition $L_0\in Symm_n$ we consider the QR-decomposition of the following exponent
\[
\exp(t\mathbf{E}_{k,i}(L_0))=R(t)Q(t),
\]
then $L(t)=Q(t)L_0Q(t)^T$ solves the equation
\[
\dot L=[M(\mathbf{E}_{k,i}(L)),L].
\]
The reasoning is just like in the proof of the theorem \ref{th:1} above.

\section{Conclusions and remarks}
I this section we gathered few results and observations that do not fit into the main body of the text, but we believe are quite interesting on their own. One may say that these are curious observations, that can lead to further investigations.
\subsection{Commutation relation}
It would be interesting to find a geometric mechanism underlying the commutativity of Deift integrals: we know that  the fields $T_k(L)$ are the Hamiltonian fields of $I_{k,n-1}(L)$, so they should commute with each other by general theory; but one can ask if it is possible to prove the commutation relations independently of the identification of $T_k(L)$ with Hamiltonian fields. This might help one to look for other commutative families and symmetries of the Toda system, c.f. \cite{CSS19} for example.

It turns out that up to a certain degree this can be done by direct computation without the AKS theory. To this end recall that $M:\sln\to\mathfrak{so}_n$ is the natural projection along the subalgebra $\mathfrak b_-$ associated with the direct sum decomposition \eqref{eq:triangledec} and $T_k(L)$ are the vector fields, given by equation
\[
T_k(L)=[M(\tau_k(L)),L]
\]
In particular
\[
T_1(L)=[M(L),L]
\]
is the Toda flow field.
Then the following is true: the fields $T_k(L)$ commute with the Toda flow, i.e. $[T_1,T_k]=0$ for all $k$.

To prove this we begin with the following equation, verified by $M$: for any $X,Y\in\sln$ we have
\begin{equation}\label{eq:nij1}
M([X,Y])-M([M(X),Y]+[X,M(Y)])+[M(X),M(Y)]=0.
\end{equation}
This equation is an algebraic version of the vanishing Nijenhuis torsion equation, and follows by a straightforward computation from the fact that both $\mathfrak{so}_n$ and $\mathfrak b_n$ are subalgebras of \sln. Let now $1<k$ and consider the commutator $[T_1,T_k]$. We compute:
\[
\begin{aligned}
{}[T_1,T_k]&=[M(\tau_k([M(L),\dot L]),L]+[M(\tau_k(L)),[M(L),L]]\\
               &\quad-[M([M(\tau_k((L)),L]),L]-[M(L),[M(\tau_k(L)),L]],
\end{aligned}
\]
where we used the notation from the formula \eqref{eq:ddiff}. On the other hand, writing $L$ as a sum of two complementary projections induced by \eqref{eq:triangledec}, $L=M(L)+B(L)$ we have by the same formula \eqref{eq:ddiff} (since $[L,L]=0$)
\[
\tau_k([M(L),\dot L])=\tau_k([-B(L),\dot L])=[-B(L),\tau_k(L)]=[M(L),\tau_k(L)]-[L,\tau_k(L)].
\]
Hence
\[
\begin{aligned}
{}[T_1,T_k]&=[M([M(L),\tau_k(L)]),L]-[M([L,\tau_k(L)]),L]+[M(\tau_k(L)),[M(L),L]]\\
                 &\quad-[M([M(\tau_k((L)),L]),L]-[M(L),[M(\tau_k(L)),L]]\\
                 &=[[M(\tau_k(L)),M(L)],L]-[M([M(\tau_k((L)),L]),L]\\
                 &\quad-[M([\tau_k(L),M(L)]),L]+[M([\tau_k(L),L]),L]=0,
\end{aligned}
\]
where we used the skew symmetry and Jacobi identity for the commutator and the Nijenhuis equation \eqref{eq:nij1}.

One may try to show by this or a similar method that the fields $T_k(L)$ and $T_l(L)$ do in fact commute with each other for all $k$ and $l$, but so far our attempts to do this by hand did not bring result. Doing this might shed the light on the role of Nijenhuis type relations in the theory of Toda integrable systems: so far they appeared in the context of bihamiltonian structures and Lenart-Magri induction. In particular, it has been shown (see \cite{DM}) that the integrability of the full Kostant-Toda system can be derived from a bihamiltonian structure on it. On the other hand in the case of the full symmetric Toda system no bihamiltonian structures have been suggested so far.
\subsection{Other vector fields}
In the section \ref{sect:chopqr1} we described the formulas for the Hamiltonian fields, induced by the chopping integrals $I_{k,i}(L)$ (see equation \eqref{eq:nablaeki}). This formula resembles the formula for the fields $T_k(L)$, see \eqref{eq:matrixcoef1}; however even in the simplest case $k=2$, it is not easy to show that the Hamiltonian field, induced by $I_{k,n-1}(L)$ does coincide with $T_k(L)$ (we did it in section \ref{sect:hamilttk} by direct computation). Now, similarly to $T_k(L)$, induced by the matrix-valued functions $\tau_k(L)$, one may define another family of operator-valued functions $\theta_{k,m}(L)$:
\bea
\left(\theta_{k,m}(L)\right)^i_j=\sum_{s_1,\ldots,s_m}L^{s_1,\ldots,s_m,j,n-k+1,\ldots,n}_{1,\ldots,k,i,s_1,\ldots,s_m}/L_{1,\ldots,k}^{n-k+1,\ldots,n}
\eea
In effect, these functions are straightforward generalisations of the operation $\tau_k(L)$. Indeed, it follows from the formula \eqref{eq:matrixcoef1} that
\bea
M(\theta_{k,1}(L))=M(\tau_k(L)).\nn
\eea
Now by a straightforward, but lengthy and tedious calculation, which we omit, one can obtain the following proposition, similar to the Theorem \ref{theo:tauk}
\begin{Prop}
The hamiltonian flows given by hamiltonians $I_{k,n-m}(L)$ have Lax representation
\bea
\dot{L}=[M(\theta_{k,m}(L)),L],\nn
\eea
\end{Prop}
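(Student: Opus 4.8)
The plan is to follow exactly the template established by Theorem \ref{theo:tauk}, which handled the special case $m=1$: the goal is to show that the upper-triangular part of $\nabla I_{k,n-m}(L)$ coincides (on symmetric matrices) with the upper-triangular part of $\theta_{k,m}(L)$, since then $M(\nabla I_{k,n-m}(L))=M(\theta_{k,m}(L))$ and the Hamilton field formula $X_f=[M(\nabla f),L]$ from \eqref{eq:hamfieldsym1} gives the claimed Lax representation. First I would write out the chopping integral $I_{k,n-m}(L)=E_{k,n-m}(L)/E_{k,0}(L)$ using the identity $E_{k,i}=D_k(E_i)$ from Lemma \ref{lem:key}, so that $E_{k,n-m}(L)$ is, up to sign, the coefficient of $\lambda^{n-2k-m}$ in $\det((L-\lambda\mathbbm1)^{(k)})$. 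Since $E_{k,0}(L)=L^{n-k+1,\ldots,n}_{1,\ldots,k}$ is strictly lower-triangular in its $\lambda^{n-2k}$-dependence and is killed by $M$, the computation reduces, exactly as before, to identifying the coefficients of $d(a_{ij})$ with $i<j$ in the differential $d\big(\chi_k(\lambda)/E_{k,0}(L)\big)$ and extracting the appropriate power of $\lambda$.

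The key computational step is to differentiate the chopped characteristic polynomial with respect to an off-diagonal entry $a_{ij}$ with $i<j$. Because $L$ is symmetric, $\partial/\partial a_{ij}$ acts on both the $(i,j)$ and $(j,i)$ entries, and as in the proof of Theorem \ref{theo:tauk} the dominant contribution comes from removing the $j$-th row and $i$-th column of $(L-\lambda\mathbbm1)^{(k)}$. The new feature at level $m$ is that one must now track the coefficient of $\lambda^{n-2k-m}$ rather than the top coefficient: expanding the minor in powers of $\lambda$, the coefficient of $\lambda^{n-2k-m}$ is a sum over the choices of which $m-1$ of the remaining diagonal positions contribute a factor $-\lambda$, which precisely produces the sum over indices $s_1,\ldots,s_m$ appearing in the stated formula for $(\theta_{k,m}(L))^i_j$. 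I would verify that the bordered-minor bookkeeping matches the determinant $L^{s_1,\ldots,s_m,j,n-k+1,\ldots,n}_{1,\ldots,k,i,s_1,\ldots,s_m}$ after the chopping of the first $k$ rows and last $k$ columns is absorbed into the fixed index blocks $1,\ldots,k$ and $n-k+1,\ldots,n$, and that the normalization by $E_{k,0}(L)=L^{n-k+1,\ldots,n}_{1,\ldots,k}$ reproduces the denominator.

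The main obstacle I expect is precisely this combinatorial matching of minors: keeping straight the signs (as in the factor $(-1)^{(n-1)k}$ of \eqref{eq:altchop1} and the $(-1)^{k-1}$ of \eqref{eq:matrixcoef1}), the ranges of the summation indices $s_1,\ldots,s_m$, and the degeneracies when indices coincide (recall that the determinants vanish when rows or columns repeat, which automatically restricts the admissible $s_\ell$). This is the lengthy and tedious calculation the authors explicitly chose to omit. A cleaner route, which I would attempt in parallel, is to avoid the explicit minor expansion altogether by using the operator identity \eqref{eq:nablaeki}, namely that $\nabla$ and $D_k$ commute as constant-coefficient differential operators, so that $\nabla E_{k,i}=D_k(\nabla E_i)$; combined with the determinantal formula \eqref{eq:nablall} for $\nabla\det(L-\lambda\mathbbm1)$, applying $D_k$ and extracting the coefficient of $\lambda^{n-m}$ should yield $\theta_{k,m}(L)$ directly, sidestepping the case-by-case minor analysis. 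Either way, once the equality $M(\nabla I_{k,n-m}(L))=M(\theta_{k,m}(L))$ is in hand, the Proposition follows immediately from \eqref{eq:hamfieldsym1}.
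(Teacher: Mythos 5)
The paper contains no proof of this Proposition to compare against: it is stated with the remark that it follows ``by a straightforward, but lengthy and tedious calculation, which we omit,'' similar to Theorem \ref{theo:tauk}. Your overall plan --- reduce everything to the identity $M(\nabla I_{k,n-m}(L))=M(\theta_{k,m}(L))$, kill the quotient-rule term coming from $\nabla E_{k,0}$ by its lower-triangularity, and conclude from the Hamilton field formula \eqref{eq:hamfieldsym1} --- is exactly the computation the authors gesture at, and your parallel route through \eqref{eq:nablaeki} (commuting $\nabla$ with $D_k$) is precisely the mechanism the paper itself deploys for the general chopping flows in section \ref{sect:chopqr1}. So in outline you are on the intended route.

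However, the one step you describe concretely is off, and it is the step that carries all the content. When you differentiate $\chi_k(\lambda)$ with respect to an off-diagonal entry, the cofactor already pins one border row to $j$ and one border column to $i$; the free summation indices are then the diagonal positions at which the factor $-\lambda$ is \emph{not} taken. Hence the coefficient of $\lambda^{n-2k-m}$ in this cofactor is a sum over only $m-2$ free indices and consists of minors of size $k+m-1$, whereas $\left(\theta_{k,m}(L)\right)^i_j$ is a sum over $m$ indices of minors of size $k+m+1$; moreover your phrase ``which $m-1$ of the remaining diagonal positions contribute a factor $-\lambda$'' inverts the roles, since $-\lambda$ must be taken at $n-2k-m$ positions. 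To land on the stated formula one must extract the coefficient of $\lambda^{n-2k-m-2}$, i.e., in the labelling of \eqref{Pkn}, compute $\nabla E_{k,m+2}$: one finds $\partial E_{k,m+2}/\partial x_{ij}=\pm\sum_{s_1<\dots<s_m}L^{s_1,\dots,s_m,j,n-k+1,\dots,n}_{1,\dots,k,i,s_1,\dots,s_m}$, which up to sign is exactly $E_{k,0}(L)\left(\theta_{k,m}(L)\right)^i_j$. The confusion is partly inherited from the paper, whose convention in section \ref{sect:hamilttk} (``$E_{k,n-i}$ is the coefficient at $\lambda^{n-2k-i}$'') is shifted against \eqref{Pkn}; but a complete proof must fix one convention, re-identify which chopping Hamiltonian actually pairs with $\theta_{k,m}$, and only then match minors. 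As written, your matching is off by two in the $\lambda$-degree and the ``bordered-minor bookkeeping'' you defer is not a routine check but the heart of the argument; I would promote your second route via \eqref{eq:nablaeki} to the primary one, since it organizes exactly this bookkeeping correctly.
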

However, it is not easy showing by direct computation that $\theta_{k,m}(L)$ verify the same characteristic property \eqref{eq:charproperty}, that we need to prove the applicability of QR-decomposition method. It is even less clear, if one can prove the commutativity of the fields $\Theta_{k,m}(L)=[M(\theta_{k,m}(L)),L]$ without resorting to their being Hamiltonian fields of a commutative family.

\subsection{Vector fields due to non-chopping integrals}
It is well known that the full symmetric $sl(n)$ Toda system has additional integrals except for the integrals obtained by the chopping procedure ( \cite{ErcolFlaSin}, \cite{BG}, \cite{GS}, \cite{CS}), so that the Toda system is integrable in the non-commutative sense \cite{Neh}. Let us consider the example $n=4$ and
$$
X=
\left(
\begin{array}{c c c c}
 x_{11} & x_{12} & x_{13} & x_{14}\\
 x_{21} & x_{22} & x_{23} & x_{24}\\
 x_{31} & x_{32} & x_{33} & x_{34}\\
 x_{41} & x_{42} & x_{43} & x_{44}\\
\end{array}
\right).
$$
In this case the additional integral (via Chernyakov-Sorin method) has the following form
\bea
H=\frac {(X^2)^{34}_{12}}{X^{34}_{12}}.\nn
\eea
After taken the gradient, the symmetrization, and the projection on $\mathfrak{so}$ we get the matrix of M-operator in the following form:
\small{
$$
M_{J^{cs}}= \frac{1}{L^{34}_{12}}\left(
\begin{array}{rrrr}
0 & -B_{1} & 0 & 0\\
B_{1} & 0 & 0 & 0\\
0 & 0 & 0 &  -B_{2}\\
0 & 0 & B_{2} & 0\\
\end{array}
\right)
$$
and
\bea
J^{cs}=\frac {(L^2)^{34}_{12}}{L^{34}_{12}},\nn
\eea
where
$$\ B_{1}=(L^2)_{14}a_{13}-(L^2)_{13}a_{14}, \ B_{2}=(L^2)_{14}a_{24}-(L^2)_{24}a_{14}$$
and $L$ as usually is
$$
L=
\left(
\begin{array}{c c c c}
 a_{11} & a_{12} & a_{13} & a_{14}\\
 a_{12} & a_{22} & a_{23} & a_{24}\\
 a_{13} & a_{23} & a_{33} & a_{34}\\
 a_{14} & a_{24} & a_{34} & a_{44}\\
\end{array}
\right).
$$
Considering skew-symmetric square of L which has $2\times2$ minors of L as entries, after chopping procedure the characteristic polynomial $\widetilde{\chi}_{1}(L,\lambda)$
will have the following form:
\small{
$$
\left |
\begin{array}{ccccc}
 a_{11} a_{23}-a_{12} a_{13} & a_{11} a_{33}-a_{13}^2  -\lambda & a_{11} a_{34}-a_{13} a_{14} & a_{12} a_{33}-a_{13} a_{23} & a_{12} a_{34}-a_{14}
   a_{23}\\
 a_{11} a_{24}-a_{12} a_{14} & a_{11} a_{34}-a_{13} a_{14} & a_{11} a_{44}-a_{14}^2  -\lambda & a_{12} a_{34}-a_{13} a_{24} & a_{12} a_{44}-a_{14}
   a_{24}\\
 a_{12} a_{23}-a_{13} a_{22} & a_{12} a_{33}-a_{13} a_{23} & a_{12} a_{34}-a_{13} a_{24} & a_{22} a_{33}-a_{23}^2  -\lambda & a_{22} a_{34}-a_{23}
   a_{24}\\
 a_{12} a_{24}-a_{14} a_{22} & a_{12} a_{34}-a_{14} a_{23} & a_{12} a_{44}-a_{14} a_{24} & a_{22} a_{34}-a_{23} a_{24} & a_{22}
   a_{44}-a_{24}^2  -\lambda \\
 a_{13} a_{24}-a_{14} a_{23} & a_{13} a_{34}-a_{14} a_{33} & a_{13} a_{44}-a_{14} a_{34} & a_{23} a_{34}-a_{24} a_{33} & a_{23}
   a_{44}-a_{24} a_{34}
\end{array}
\right |,
$$
}
and we get the additional integral $J_{1,2} = \frac{\widetilde{E}_{1,2}(L)}{\widetilde{E}_{1,0}(L)}$. This integral and the integral $J^{cs}$ are functionally dependent.
\beq{spectral curve}
\begin{array}{c}
\widetilde{E}_{1,2} =
\left(a_{14} a_{23}-a_{13} a_{24}\right) \left( a_{11} a_{33}+a_{22}a_{33}+a_{11}a_{44}+a_{22} a_{44}-a_{13}^2-a_{23}^2-a_{14}^2-a_{24}^2 \right)+\\
+\left(a_{11} a_{23}-a_{12} a_{13}\right) \left(a_{13} a_{34}-a_{14} a_{33}\right)
+\left(a_{12} a_{23}-a_{13}a_{22}\right) \left(a_{23} a_{34}-a_{24} a_{33}\right)+\\
+\left(a_{11} a_{24}-a_{12} a_{14}\right) \left(a_{13} a_{44}-a_{14} a_{34}\right)
+\left(a_{12} a_{24}-a_{14} a_{22}\right) \left(a_{23} a_{44}-a_{24} a_{34}\right),\\
\ \\
\widetilde{E}_{1,0} =a_{13} a_{24}-a_{14} a_{23}.
\end{array}
\eq
$$J_{1,2}=J^{cs}-\frac{1}{2}(Tr(L^{2})-(Tr(L))^{2})=J^{cs}+E_{2},$$
where $E_{2}$ is the coefficient of this characteristic polynomial $\chi_{0}(L,\lambda)$, see Appendix A. The connection between vector fields $\mathcal{T}_{H}$ on symmetric matrix space generated by integrals $J_{1,2}, \ J^{cs}, \ (1/2)TrL^{2}$ has the following form:
\beq{Flows}
\mathcal{T}_{J_{21}}(L)=\mathcal{T}_{(1/2)TrL^{2}}(L)+\mathcal{T}_{J^{cs}}(L),
\eq
and the dynamics associated with these integrals has the following Lax representation
$$\frac{dL}{dt_{J_{1,2}}}=[M(L),L]+[M_{J^{cs}},L], \ M(L)=L_{+}-L_{-}.$$

\subsection{Further questions}
Let us finally formulate few questions, which are closely related with the material covered in this paper and which may become the subject of future investigations.

First of all, one can extend the constructions of Toda system and its integrals to other representations of the Lie group $SL_n(\R)$. For instance, take the exterior square representation of $SL_4(\R)$ in $SL_6(\R)$; then one can use it to pull back the integrals of the Toda system on $\mathfrak{sl}_6$ to $\mathfrak{sl}_4$. Direct computations show that these pull-backs will contain some of the integrals of the Toda system on $\mathfrak{sl}_4$. 
Another example of the way representations of the Lie algebra show up in the study of the Toda system, can be found in \cite{CSS19} (c.f. the section \ref{sect:CSRepr}). One can ask about the way different representations can be used to obtain the integrals of the Toda system, in particular, about the relation between the representation and the number of integrals that can be ``pulled'' from it; the way the integrals change when some natural operations are applied to representations is also an open question.

Still more generally one can do the following trick: let $\rho:SL_n(\R)\to SL_N(\R)$ be a representation, which when restricted to the Lie algebra preserves the symmetricity of the matrices, see previous example; restricting it to $Symm_n$ we can use it to induce a dynamical system on $Symm_n$ from the Toda system on $Symm_N$. The properties of these systems, their integrability and relation with Toda system on $Symm_n$ seem to be an interesting subject for future work. This question is closely related with the problem of describing all possible Lax representations of the full symmetric system: is it possible that in addition to the equation \eqref{eq:laxtoda}, one can write similar equations
\[
\frac{d\tilde L}{dt}=[\tilde M,\tilde L]
\]
for some other matrices $\tilde L$ and $\tilde M$, depending on the point in our phase space $Symm_n$, such that the resulting system will coincide with the Toda system. In this case one can obtain the first integrals of the system as the invariant functions of $\tilde L$ matrix. It is interesting, if the chopping integrals have this nature.

Another question, that one can ask, is whether the constructions used in this paper and the corresponding results stay valid when we pass from the classical to quantum Toda system: one can suggest at least two such noncommutative (quantum) generalisations. One consists with replacing the functions on $\sln^*$ with the universal enveloping algebra of \sln. In that case certain progress has been made (see for instance \cite{T1}): one can obtain analogs of the chopping integrals there as the elements in a suitable localisation of the universal enveloping algebra $U\sln$. Still, the question, if these elements can be obtained by a procedure, similar to the one we used in lemma \ref{lem:key} remains widely open. The other question is still more intriguing: it is known, that the usual Toda system (open Toda chain), can be generalised to the situation, where the ``dynamics'' takes place at an arbitrary ring with differential $d$; in this case one can find the ``solutions'' of such system in terms of quasi-determinants and use them to describe solutions of the noncommutative Painlev\'e equation, see  \cite{BRRS} for example. In our case, one can consider the ``system'' on matrices with noncommutative entries of arbitrary nature; the question, if this construction preserves any resemblance with the one we studied here, is open.

\appendix
\section{Example of the "chopping" construction using $D_{k}$, case $n=4$}
Here we demonstrate the construction in the case $n=4$. First of all let us consider characteristic polynomial without chopping procedure. It has the following form:
\beq{spectral curve}
\begin{array}{c}
\chi_{0}(L,\lambda)= \lambda^{4}E_{0} - \lambda^{3}E_{1} + \lambda^{2}E_{2}
 - \lambda E_{3} + E_{4},\\
\ \\
E_{0}=1, \ E_{1}=TrL, \ E_{2}=\frac{1}{2}(TrL)^{2} - \frac{1}{2}TrL^{2},\\
\ \\
E_{3}=\frac{1}{3}(TrL)^{3} - \frac{1}{3}TrL^{3}-TrL(\frac{1}{2}(TrL)^{2} - \frac{1}{2}TrL^{2}), \ E_{4}=detL,
\end{array}
\eq
where $TrL$ is Casimir and
$$
L=
\left(
\begin{array}{c c c c}
 a_{11} & a_{12} & a_{13} & a_{14}\\
 a_{12} & a_{22} & a_{23} & a_{24}\\
 a_{13} & a_{23} & a_{33} & a_{34}\\
 a_{14} & a_{24} & a_{34} & a_{44}\\
\end{array}
\right).
$$
Now we consider the characteristic polynomial $\chi_{1}(L,\lambda)$
\beq{spectral curve2}
\begin{array}{c}
\chi_{1}(L,\lambda)=\lambda^{2}E_{1,0} + \lambda E_{1,1} + E_{1,2},\\
\ \\
E_{1,0} = a_{14}, \,\,\,\, E_{1,1} = A_{\frac{34}{13}} + A_{\frac{24}{12}}, \,\,\,\, E_{1,2} = A_{\frac{234}{123}},,
\end{array}
\eq
Then we have to put $k=1$ and the formula (\ref{eq:altchop1}) will give:
$$D_1(E_{0})=0, \ D_1(E_{1})=0, \ E_{1,0}=D_1(E_{2}), \ E_{1,1}=D_1(E_{3}), \ E_{1,2}=D_1(E_{4}),$$

$$E_{k,m}=D_k(E_{2k+m}),$$
where we put $D_1=\frac{\partial}{\partial x^4_1}$ before the symmetrization.

Note, that each one of the DNLT-integrals which constructs by $E_{k,m}$ can be expressed by the integrals obtained by Chernyakov-Sorin method (the reverse is not true) so these integrals are functionally dependant, for example  \beq{twointegrals}
I^{cs}_{1,2}=I_{1,2}-E_{2}+E_{1}(E_{1}+I_{1,1}),
\eq
where
$$I_{1,1}(L) = \frac{E_{1,1}(L)}{E_{1,0}(L)},$$
$E_{2}=\frac{1}{2}(TrL)^{2} - \frac{1}{2}TrL^{2}$. Note that $I^{cs}_{1,1}$ and $E_{1}=TrL$ are Casimir operators and $I^{cs}_{1,1}=E_{1}+I_{1,1}$.

\section{Calculation the matrix of M-operator, case $n=4$}
Considering AKS method for the full symmetric Toda system the dynamic follows from the Lie-Poisson brackets on symmetric matrices in the following form:
\beq{LaxandM}
\begin{array}{c}
\frac{dL}{dt_{H}}=[M(\nabla H), L], \ \nabla H \in \mathfrak{sl}_n,
\end{array}
\eq
where M-operator is projection on $\mathfrak{so}$. Here we assume the symmetrization after taking the gradient. Let us consider
$$H=I^{cs}_{1,2}- (1/2)TrX^{2}=\frac {(X^3)_{61}}{x_{61}}-(1/2)TrX^{2}, \ X \in \mathfrak{sl}^{\ast}_n$$
and take the gradient:
$${F^{ab}}=\frac{\partial H}{\partial x_{ab}}=\frac{\partial}{\partial x_{ab}} \left( \frac{x_{4i} x_{ij} x_{ji} - (1/2)x_{kl} x_{lk} x_{41}}{x_{41}} \right).$$
We get
$$
{F^a_b}=\frac 1 {x_{41}} \left(
\begin{array}{rrrr}
\ast & 0 & 0 & 0\\
\ast & \ast & X^{34}_{12} & 0\\
\ast & \ast & \ast & 0\\
\ast & \ast & \ast & \ast\\
\end{array}
\right),
$$
where $\ast$ are some expression of $x_{ij}$ which canceled after projection on $\mathfrak{so}$ along $\mathfrak{b_{-}}$. After the symmetrization and the projection on $\mathfrak{so}$ we get the matrix of M-operator in the following form:
$$
M_{H}=\frac 1 {a_{14}} \left(
\begin{array}{rrrr}
0 & 0 & 0 & 0\\
0 & 0 & L^{34}_{12} & 0\\
0 & -L^{34}_{12} & 0 & 0\\
0 & 0 & 0 & 0\\
\end{array}
\right)
$$
which coincides with the matrix of M-operator of DLNT integral
$$I_{1,2} = \frac{E_{1,2}(L)}{E_{1,0}(L)}$$
which can be obtained by the direct calculation too.
And finally we have the formula:
$$\frac{dL}{dt_{I^{cs}_{1,2}}}=[M(L),L]+[M_{I_{1,2}},L],$$
where $M(L)=L_{+}-L_{-}=M_{(1/2)TrL^{2}}$.

\newpage

\end{document}